\documentclass[sigconf]{acmart}
\usepackage{amsthm}
\usepackage{amsmath}
\usepackage[linesnumbered,ruled,vlined]{algorithm2e}
\usepackage{multirow}
\usepackage{graphicx}
\usepackage{xcolor}
\usepackage{colortbl}
\usepackage{tcolorbox}

\usepackage{arydshln} 

\usepackage{amsthm}
\theoremstyle{plain}  % 定义为"plain"风格, 使定理名称斜体
\newtheorem{theorem}{Theorem}
\usepackage{enumitem} % 提供灵活的列表设置
\newtheorem{remark}{Remark}
\usepackage{lineno}

\AtBeginDocument{%
  }

\copyrightyear{2026}
\acmYear{2026}
\setcopyright{cc}
\setcctype{by}
\acmConference[KDD 2026] {Proceedings of the 32nd ACM SIGKDD Conference on Knowledge Discovery and Data Mining V.1}{August 9--13, 2026}{Jeju Island, Republic of Korea.}
\acmBooktitle{Proceedings of the 32nd ACM SIGKDD Conference on Knowledge Discovery and Data Mining V.1 (KDD 2026), August 9--13, 2026, Jeju Island, Republic of Korea}
\acmISBN{979-8-4007-2258-5/2026/08}
\acmDOI{10.1145/3770854.3780249}
\begin{document}

%%
%% The "title" command has an optional parameter,
%% allowing the author to define a "short title" to be used in page headers.
\title[PSQE: A Theoretical-Practical Approach to Pseudo Seed Quality Enhancement for Unsupervised MMEA]{PSQE: A Theoretical-Practical Approach to Pseudo Seed Quality Enhancement for Unsupervised Multimodal Entity Alignment}%%PSQE: A Theory-Guided Framework for Precision and Coverage-Optimized Pseudo Seeds in Unsupervised Multimodal Entity Alignment;  % Theoretical and Practical Enhancements in Unsupervised Multimodal Entity Alignment via Pseudo Seed Refinement}  %
 % Towards Unsupervised Multimodal Entity Alignment: Enhancing pseudo seed Quality with Multimodal Information and Resampling Strategies

%%
%% The "author" command and its associated commands are used to define
%% the authors and their affiliations.
%% Of note is the shared affiliation of the first two authors, and the
%% "authornote" and "authornotemark" commands
%% used to denote shared contribution to the research.

\author{Yunpeng Hong}
\email{hongyp@mail.hfut.edu.cn}
\orcid{0009-0000-6150-418X}
\affiliation{%
  \institution{Key Laboratory of Knowledge Engineering with Big Data (the Ministry of Education of China), Hefei University of
Technology}
  \city{Hefei}
  \state{Anhui}
  \country{CN}
}

\author{Chenyang Bu}
\email{chenyangbu@hfut.edu.cn}
\orcid{0000-0001-8203-0956}
\authornotemark[1]
\affiliation{%
  \institution{Key Laboratory of Knowledge Engineering with Big Data (the Ministry of Education of China), Hefei University of
Technology}
  \city{Hefei}
  \state{Anhui}
  \country{CN}
}

\author{Jie Zhang}
\email{jiezhang24@mail.hfut.edu.cn}
\orcid{0009-0005-4707-8272}
\affiliation{%
  \institution{Key Laboratory of Knowledge Engineering with Big Data (the Ministry of Education of China), Hefei University of
Technology}
  \city{Hefei}
  \state{Anhui}
  \country{CN}
}

\author{Yi He}
\email{yihe@wm.edu}
\orcid{0000-0002-5357-6623}
\affiliation{%
  \institution{Department of Data Science, College of William and Mary}
  \city{Williamsburg}
  \state{VA}
  \country{USA}
}

\author{Di Wu}
\email{wudi.cigit@gmail.com}
\orcid{0000-0002-7788-9202}
\affiliation{%
  \institution{College of Computer and Information Science, Southwest University}
  \city{Chongqing}
  \country{CN}
}

\author{Xindong Wu}
\authornote{Corresponding authors.}
\email{xwu@hfut.edu.cn}
\orcid{/0000-0003-2396-1704}
\affiliation{%
  \institution{Key Laboratory of Knowledge Engineering with Big Data (the Ministry of Education of China), Hefei University of
Technology}
  \city{Hefei}
  \state{Anhui}
  \country{CN}
}

%%
%% By default, the full list of authors will be used in the page
%% headers. Often, this list is too long, and will overlap
%% other information printed in the page headers. This command allows
%% the author to define a more concise list
%% of authors' names for this purpose.
\renewcommand{\shortauthors}{Yunpeng Hong et al.}

%%
%% The abstract is a short summary of the work to be presented in the
%% article.
\begin{abstract}
Multimodal Entity Alignment (MMEA) aims to identify equivalent entities across different data modalities, enabling structural data integration that in turn improves the performance of various large language model applications.
To lift the requirement of labeled seed pairs that 
are difficult to obtain, recent methods shifted to an
unsupervised paradigm using pseudo-alignment seeds.
However, unsupervised entity alignment in multimodal settings remains underexplored, mainly because the incorporation of multimodal information often results in imbalanced coverage of pseudo-seeds within the knowledge graph.
To overcome this, we propose PSQE (\textbf{P}seudo-\textbf{S}eed \textbf{Q}uality \textbf{E}nhancement) to improve the precision and graph coverage balance of pseudo seeds via multimodal information and clustering-resampling. 
Theoretical analysis reveals the impact of pseudo seeds on existing contrastive learning-based MMEA models. 
In particular, pseudo seeds can influence the attraction and the repulsion terms in contrastive learning at once, 
whereas imbalanced graph coverage causes models to prioritize high-density regions, thereby weakening their learning capability for entities in sparse regions.
Experimental results validate our theoretical findings and show that PSQE as a plug-and-play module can improve the performance of baselines by considerable margins.
\end{abstract}

\begin{CCSXML}
<ccs2012>
   <concept>
       <concept_id>10002951.10002952.10003219.10003183</concept_id>
       <concept_desc>Information systems~Deduplication</concept_desc>
       <concept_significance>300</concept_significance>
   </concept>
   <concept>
    <concept_id>10003752.10010070.10010111.10011733</concept_id>
       <concept_desc>Theory of computation~Data integration</concept_desc>
       <concept_significance>300</concept_significance>
   </concept>
   <concept>
       <concept_id>10002951.10003317</concept_id>
       <concept_desc>Information systems~Information retrieval</concept_desc>
       <concept_significance>300</concept_significance>
   </concept>
</ccs2012>
\end{CCSXML}

\ccsdesc[300]{Information systems~Deduplication}
\ccsdesc[300]{Theory of computation~Data integration}
\ccsdesc[300]{Information systems~Information retrieval}

%%
%% The code below is generated by the tool at http://dl.acm.org/ccs.cfm.
%% Please copy and paste the code instead of the example below.
%%
% \begin{CCSXML}
% <ccs2012>
% <concept>
% <concept_id>10002951.10002952.10003219.10003221</concept_id>
% <concept_desc>Information systems~Wrappers (data mining)</concept_desc>
% <concept_significance>500</concept_significance>
% </concept>
% </ccs2012>
% \end{CCSXML}

% \ccsdesc[500]{Information systems~Wrappers (data mining)}

%%
%% Keywords. The author(s) should pick words that accurately describe
%% the work being presented. Separate the keywords with commas.
\keywords{Multimodal Entity Alignment,  Contrastive Learning, Pseudo Seed}
% %% A "teaser" visual appears between the author and affiliation
% %% information and the body of the document, and typically spans the

% \received{20 February 2007}
% \received[revised]{12 March 2009}
% \received[accepted]{5 June 2009}

%%
%% This command processes the author affiliation and title
%% information and builds the first part of the formatted document.
\maketitle
\newcommand\kddavailabilityurl{https://doi.org/10.1145/3770854.3780249}
\ifdefempty{\kddavailabilityurl}{}{
\begingroup\small\noindent\raggedright\textbf{Resource Availability:}\\
% please change the following context to include multiple artifacts if necessary, including data, models, code, etc.
The source code of this paper has been made publicly available at \url{https://github.com/flyfish259/PSQE}.
\endgroup
}
\begin{sloppypar}
%\vspace{-1em}
\section{Introduction}
Multimodal entity alignment (MMEA) is pivotal for integrating heterogeneous multimodal data (e.g., text, images, videos) \cite{ektefaie2023multimodal,liu2024cross} across diverse sources, enabling the identification of equivalent entities and breaking data silos. This capability is essential for enhancing knowledge-driven applications~\cite{yu2020cross,ektefaie2023multimodal, retrieval,zhuo2025effective,DBLP:conf/ijcai/ZhuoPWW0LW025}, ranging from improving the GraphRAG frameworks~\cite{edge2024local,bu2025query,huang2025mitigate} to augmenting large language models (LLMs) with multimodal context~\cite{kasneci2023chatgpt}, to name a few.

A prominent challenge in scaling MMEA to real-world scenarios lies in the reliance on labeled training data. Most existing MMEA methods~\cite{MSNEA, ack-mmea} rely on supervised learning, where a model is trained using expert-labeled seed pairs to compute the similarity between entities in a graph and determine alignment relationships. 
To wit, 30\% of known seed pairs are used as labels for MMEA training in the benchmark DBP15K dataset~\cite{sun2017cross}.
The manual annotation process is prohibitively expensive. 
Although one may think using 
large language models (LLMs) for the MMEA task~\cite{tanwar-etal-2023-multilingual,10948482}, they mostly suffer from sizable computational overheads. 
For instance, knowledge graphs having hundreds of thousands entities per language (i.e., each language as one data modality) are a norm, 
aligning all entities across languages would incur prohibitive costs.
These limitations agitate growing interest in unsupervised MMEA methods that can operate without ground-true seed-pair labels.

%While effective, labeling alignment seeds manually is time-consuming and expensive. This limitation underscores the growing significance of unsupervised MMEA methods in practical applications.%However, manually labeling alignment seeds in practical applications is time-consuming and expensive, making unsupervised MMEA methods highly significant from a practical standpoint.
%Despite recent advances, a critical challenge persists when scaling MMEA to real-world scenarios: \textbf{{How can we achieve accurate alignment without relying on costly labeled data}}?

\begin{figure}[!t]
\centering
\includegraphics[width=\linewidth]{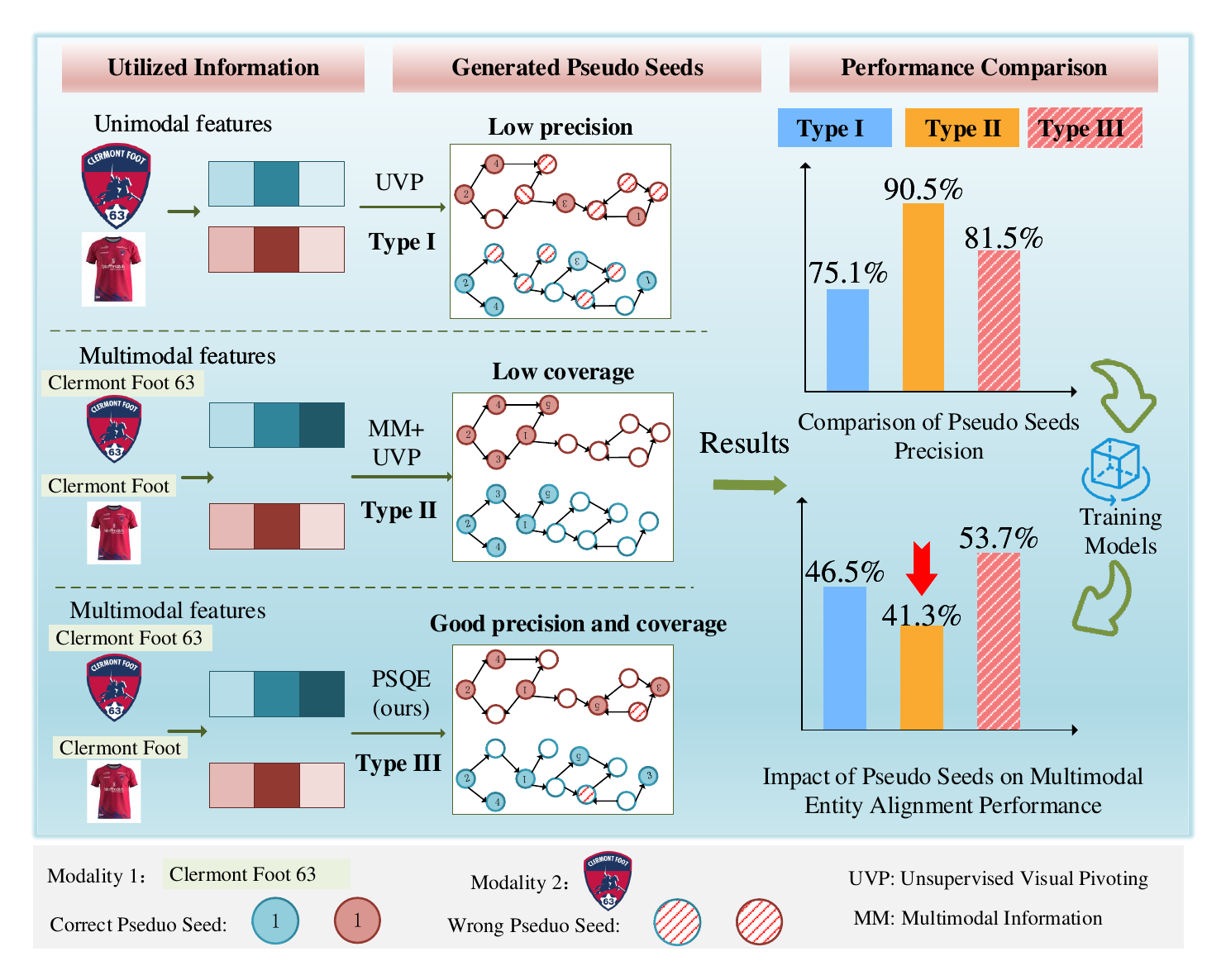}
\caption{Comparative analysis of three types of pseudo seed generation on the FR-EN dataset for Multimodal Entity Alignment (MMEA): (1) Type I: Unimodal, (2) Type II: Multimodal, and (3) Type III: Distribution-Aware Multimodal. An interesting phenomenon is observed: Type II achieves higher precision than Type I but performs worse in the downstream MMEA task. In contrast, Type III, due to its better distribution balance, yields the best performance. }
\label{motivate}
\vspace{-1em}
\end{figure}

Alas, the current studies on unsupervised MMEA remains limited,
as their designs are sensitive to 
the quality of automatically generated pseudo-aligned seeds.
In particular, no unsupervised MMEA method has considered 
balancing both {\em precision} and {\em distribution coverage}
at once. 
Consider an example illustrated in 
Fig.~\ref{motivate}, we observe that when comparing pseudo seed generation strategies: 
Type II seeds, which are based on the mechanism of Type I~\cite{liu2021visual} while integrating multi-modal information, achieve 90.5\% precision but yield worse alignment than Type I (75.1\% precision) due to imbalanced graph coverage, and Type III demonstrates optimal performance by jointly optimizing both metrics. This raises \textbf{two research questions}: 
\begin{itemize}[left=0pt, label=•, itemsep=0.2em]
    \item {What factors govern the impact of pseudo seeds on MMEA}? 
    \item {How to generate high-quality seeds without supervision}?
\end{itemize}

In this paper, we explore the two questions and provide 
answers by proposing a \textbf{P}seudo-\textbf{S}eed \textbf{Q}uality \textbf{E}nhancement (PSQE) framework for unsupervised MMEA. 
The design of PSQE is motivated by our analysis 
on how pseudo seeds impact 
contrastive learning (CL)-based MMEA models (Theorem~\ref{theorem1}).
Specifically, the lower bound of 
the intra-modal CL loss can be decomposed into two terms, namely 
i) an attraction term that minimizes the distance between aligned entity pairs  and 
ii) a repulsion term that maximizes 
separation between negative samples.
We further observe that the attraction term is 
governed by seed precision, where mismatched seeds will introduce biased gradients, pushing correct pairs apart (Fig. \ref{figtheroem1}).
Likewise, the repulsion term is 
determined by graph coverage balance, where an imbalanced coverage will skew gradients toward dense regions, under-optimizing sparse entities (Fig. \ref{figtheorem2}). 
To optimize the balance between two terms, 
PSQE integrates multimodal signals with a novel resampling strategy across three stages to simultaneously boost seed precision and balance graph coverage. 
As such, multimodal precision strengthens the attraction term, while clustering-resampling controls repulsion.
As shown in Fig. \ref{motivate}, 
seeds generated by our PSQE (Type III) excel. 
%

%

%

% Theoretically, we dissect how pseudo seeds influence contrastive learning-based MMEA models through the lens of Intra-modal Contrastive Loss (ICL). As proven in Theorem 1, the ICL lower bound decomposes into two key terms: (1) Attraction term: Governed by seed precision, it minimizes the distance between aligned entity pairs (Fig. \ref{figtheroem1}). Mismatched seeds introduce biased gradients, pushing correct pairs apart. (2) Repulsion term: Determined by graph coverage balance, it maximizes separation between negative samples. Imbalanced coverage skews gradients toward dense regions, under-optimizing sparse entities (Fig. \ref{figtheorem2}). 

Experiments on two large-scale benchmarks, DBP15k (cross-lingual) and DWY15k (monolingual), demonstrate that PSQE significantly improves the performance of state-of-the-art unsupervised MMEA methods. For instance, when integrated with MEAformer, PSQE improves the Hits@1 by 3.8\% (ZH-EN), 2.0\% (JA-EN), and 1.4\% (FR-EN), while maintaining robustness across varying seed initialization settings. Our ablation studies further confirm the necessity of multimodal fusion (where visual features contribute to a gain of >10\% over MEAformer on DBP15K) and balanced graph coverage (resulting in an MRR improvement of 1.1\%). 

% These results substantiate that  PSQE can bridge the gap between unsupervised seed generation and supervised performance, offering a scalable solution for real-world MMEA applications. 

%which leverages multimodal information and resampling strategies across three stages to improve the precision and graph coverage balance of pseudo seeds, thereby enhancing their overall quality. Additionally, we provide a theoretical analysis to reveal the impact of pseudo-aligned seeds on current popular contrastive learning-based MMEA models. We find that pseudo-aligned seeds influence both the attraction and the repulsion terms in contrastive learning: the former depends on the precision of the pseudo seeds, while the latter is affected by the balance of graph coverage. When graph coverage is imbalanced, gradient updates tend to favor entities in high-density areas, which negatively impacts the learning of entities in sparse regions.

\textbf{Specific contributions} of this paper includes:
\begin{itemize}[left=0pt, label=•, topsep=0pt, itemsep=0.2em]
     \item We propose PSQE, 
     a plug-and-play module for unsupervised MMEA. 
     It is the first framework to jointly optimize pseudo seed precision and coverage distribution, of which the performance is on a par with its supervised competitors. 
    \item  A theoretical analysis is conducted on the intra-model contrastive learning dynamics, giving explanation on how the seed quality impacts alignment performance.
    \item Extensive experiments substantiate that PSQE can significantly improve performance of the state-of-the-art models, validating the importance of seed quality in unsupervised MMEA.
\end{itemize}

% To further investigate the impact of pseudo seeds on contrastive learning-based MMEA models, we theoretically reveal the learning process of such models. We find that pseudo-alignment seeds influence both the attraction and the repulsion terms in contrastive learning, and both terms are affected by the precision and graph coverage balance of the pseudo seeds. Consequently, the performance of the model is influenced not only by the precision of pseudo seeds but also by their graph coverage balance. Based on this finding, we propose a \textbf{P}seudo-\textbf{S}eed \textbf{Q}uality \textbf{E}nhancement (PSQE) method that uses multimodal information and resampling strategies. Specifically, in the first stage, we enhance precision through multimodal features and optimize the graph coverage balance of pseudo seeds by using clustering methods. In the second stage, we use feature resampling with pseudo seeds to optimize their graph coverage balance and correct errors using multimodal entity features to improve their precision. In the third stage, we expand the pseudo seeds by resampling neighboring entities, optimizing their graph coverage balance, and applying the error correction from the second stage to enhance the quality of the final pseudo seeds. On two large-scale benchmarks, DBP15k for cross-lingual EA and DWY15k for monolingual EA, our PSQE method performs very well. Overall, our contributions are as follows:
% \vspace{-10pt}

%\vspace{-1em}
\section{Related Work}
\subsection{Multimodal Entity Alignment}
Multimodal entity alignment (MMEA) refers to the process of identifying and matching equivalent entities across different knowledge graphs using various modalities of information (such as text, images, etc.). Most existing MMEA models adopt a supervised approach, where different modality information is processed and fused to generate the final entity vector representation~\cite{zhu2023mmiea,bu2024automatic, ack-mmea,ni2023psnea, MSNEA,zhang2025robust}.
Since supervised methods rely on large-scale, high-quality annotated data, which is costly and limited in applicability, unsupervised MMEA~\cite{liu2021visual} methods have gained widespread attention.

In the field of \textbf{unsupervised unimodal entity alignment}~\cite{he2019unsupervised,zeng2021towards,tang2023fused,liu2022selfkg,wang2024pseudo,iclea}, research primarily relies on unimodal information (such as names or text) to mine pseudo-labels~\cite{meng2025se}. Current unsupervised seed acquisition methods can be categorized into three types: 1) \textbf{Symbolic-based methods} typically rely on existing symbolic knowledge bases, rules, or dictionaries to obtain seeds. For instance, SE-UEA~\cite{jiang2023integrating} uses entity symbolic similarity to assist in seed acquisition. 2) \textbf{Pre-trained model-based methods} primarily use large pre-trained models such as BERT~\cite{devlin-etal-2019-bert}, LaBSE~\cite{feng-etal-2022-language}, etc., to extract entity features for seed generation. For example, FGWEA~\cite{tang2023fused} uses LaBSE to obtain vector representations of entity names, relationships, and attributes and acquires seeds through similarity comparison. 3) \textbf{LLMs-based methods} leverage the inherent knowledge of large models to automatically learn the potential relationships between entities. For example, LLM4EA~\cite{chen2024entity} generates alignment seeds by re-ranking candidate entity pairs.

Unsupervised multimodal entity alignment research remains relatively scarce when extending to multimodal scenarios. Current methods typically generate pseudo seeds first and then learn a contrastive learning model for alignment. EVA~\cite{liu2021visual}, as the first unsupervised multimodal entity alignment method, mainly utilizes the Unsupervised Visual Pivoting (UVP) technique to generate pseudo-alignment seeds using image information and then trains the model based on Neighborhood Component Analysis (NCA) loss. MEAformer~\cite{chen2023meaformer} further expands on this idea using image and name information to generate pseudo seeds and then trains the model based on Intra-modal Contrastive Loss (ICL).
Although some existing methods have made progress in unsupervised multimodal entity alignment, current unsupervised multimodal methods still rely on single modality information to generate pseudo-alignment seeds.

\vspace{-0.6em}
\subsection{Balanced Data Distribution for Contrastive Learning}
Due to the imbalance in the distribution of labeled data, traditional contrastive learning methods~\cite{cui2021parametric,kang2020exploring} can produce bias in the model training process. To address this problem, existing research has tried to mitigate the effects of distributional imbalance by focusing on both the data level and the model level.
\textbf{At the data level,} the imbalance problem is usually mitigated by optimizing data sampling. For example, MAK~\cite{mak} effectively improves the performance of contrastive learning on imbalanced seed data by strategically selecting unlabeled data from external data sources.
% KCL~\cite{kang2020exploring}, on the other hand, adopts a two-stage learning paradigm and uses the same number of positive samples for all categories in each batch.
% PaCo~\cite{cui2021parametric} effectively overcomes the performance degradation of self-supervised contrastive learning on unbalanced data by introducing a set of category-based learnable centers.
\textbf{At the model level}, the data imbalance problem is mainly tackled by optimizing the loss function.
ImGCL~\cite{zeng2023imgcl} better maintains the intrinsic structure of the graph by weighting the important nodes in the graph.
Bacon~\cite{zhu2022balanced} proposes an adaptive mechanism that dynamically adjusts the learning rate of categories according to the imbalanced distribution of labels, according to the distribution of labels to dynamically change the learning rate of categories, thus effectively mitigating the negative impact of imbalance.

However, existing methods mainly target the classification problem, while our task is a one-to-one graphical entity matching problem, so these methods cannot effectively solve the unsupervised multimodal entity alignment problem. 
% Our study reveals the potential causes of the pseudo-label graph coverage balance imbalance problem in this context through a theoretical analysis of comparative learning.

\begin{figure*}[htbp]
\centering
\includegraphics[width=\linewidth]{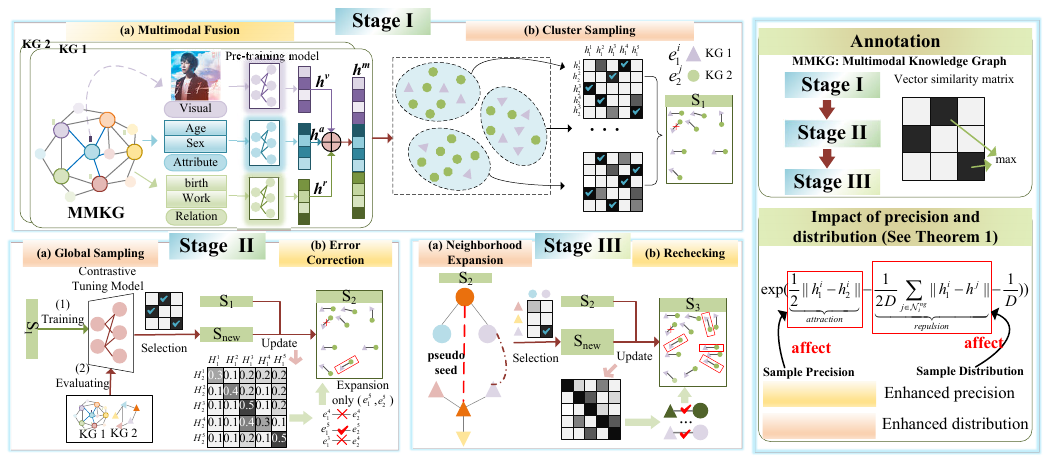}
   \caption{Overall framework of PSQE. 
 PSQE optimizes the precision and the graph coverage balance (distribution) of pseudo-alignment seeds in three stages to enhance their quality.}
\label{frame}
\vspace{-1em}
\end{figure*}

\vspace{-0.5em}
\section{Preliminaries}
\subsection{Multimodal Entity Alignment}
We define a $MMKG$ as $G = (E, R, A, V)$, where each ${\rm{e}} \in E$, ${\rm{r}} \in R$, ${\rm{a}} \in A$, and ${\rm{v}} \in V$ denotes an entity, a relation, an attribute, and a visual respectively.
Given two multi-modal knowledge graphs $G_1 = (E_1, R_1, A_1, V_1)$ and $G_2 = (E_2, R_2, A_2, V_2)$, the task of multimodal entity alignment targets to discover the set of equivalent entity pairs between $G_1$ and $G_2$, denoted as $\mathcal{M}=\{ (e_1^i,e_2^j)| e_1^i \equiv e_2^j, e_1^i \in E_1, e_2^j \in E_2\}$, where $e_1^i \equiv e_2^j$ means an equivalence entity between $e_1^i$ and $e_2^j$.

In the unsupervised setting, the MMEA model typically predicts 
$\mathcal{M}$ without relying on training data. The existing methods typically leverage pre-trained models to extract features from different modalities, and then use these features to identify pseudo-aligned entity seeds $\mathcal{S}$, where $\mathcal{S} = \{(e_1^i, e_2^j)| e_1^i \approx e_2^j\}$, and these seeds are subsequently used to train the MMEA model.

%\vspace{-1.2em}
\subsection{Contrastive Learning Loss}
In training multimodal entity alignment models, Intra-modal Contrastive Loss (ICL) is commonly used~\cite{ni2023psnea, mclea,chen2023meaformer} to clarify the boundaries of entity embeddings.

Specifically, ICL constructs negative samples aligning seed $\mathcal{S}$ and following the assumption of one-to-one alignment constraint. For two knowledge graphs $G1$ and $G2$, positive samples $P=\{ (e_1^i,e_2^j)| e_1^i \equiv e_2^j, e_1^i \in E_1, e_2^j \in E_2\}$ refer to the entity pairs in pseudo seed $\mathcal{S}$, and negative samples refer to ${\mathcal N}_i^{ng} = \left\{ {e_1^j|\forall e_1^j \in {E_1},j \ne i} \right\} \cup \left\{ {e_2^j|\forall e_2^j \in {E_2},j \ne i} \right\}$, where $E$ represents a batch of pseudo-aligned entity seeds. Furthermore, ICL constrains the embedding space with an in-batch negative sampling strategy, keeping semantically similar entities from the same knowledge graph close and improving sampling efficiency. The alignment probability distribution~\cite{mclea} can be defined as:
\begin{equation}
\scriptsize
\label{pm}
\begin{aligned}
p(h_1^i,h_2^j) = \frac{{\zeta (h_1^i,h_2^j)}}{{\zeta (h_1^i,h_2^j) + \sum\limits_{{e^j} \in {\mathcal N}_i^{ng}} \zeta  (h_1^i,{h^j})}}
\end{aligned}
\end{equation}
where ${\zeta_m}(h_1^i,h_2^j) = exp({h_1^i}^{\top } \cdot h_2^j/\tau )$, $\tau $ is a hyperparameter and $h^i_1$ is the embedding of entity $e^i_1$. Since entity alignment has direction, the distribution of Eq.~\ref{pm} is asymmetric and directional for each input. Therefore, a bi-directional alignment objective is used for each modality, as
\begin{equation}
\label{icl}
{L}^{\mathrm{ICL}}=-log(\frac{1}{2}(p(h_1^i,h_2^j)+p(h_2^j,h_1^i)))
\end{equation}

ICL loss forces the input embeddings to respect the similarity of the entities in the original embedding space and can distinguish between embeddings of the same entity from others in different knowledge graphs~\cite{mclea,wu2021knowledge}.

\section{Method}

\begin{itemize}[left=0pt, label=•, itemsep=0.5em]
    \item \textbf{O1: How to improve the precision of pseudo-seeds?} 
    % Corresponds to sections 4.2.1 (Fig. 4 Stage \uppercase\expandafter{\romannumeral 1} (a)), 4.3.2 (Fig. 4 Stage \uppercase\expandafter{\romannumeral 2} (b), \uppercase\expandafter{\romannumeral 3} (b)).
    \item \textbf{O2: How to optimize the coverage distribution of pseudo-seeds?}
    % Corresponds to sections 4.2.2 (Fig. 4 Stage \uppercase\expandafter{\romannumeral 1} (b)), 4.3.2 (Fig. 4 Stage \uppercase\expandafter{\romannumeral 2} (a)), 4.4 (Fig. 4 Stage \uppercase\expandafter{\romannumeral 3} (a)).
\end{itemize}

To address O1 and O2, the PSQE framework enhances the precision and coverage of pseudo-seeds through a three-stage strategy.

\textbf{In terms of precision optimization:} Stage \uppercase\expandafter{\romannumeral 1} integrates multimodal information to reduce the bias of single modality through complementary modal information, laying the foundation for high-quality seed selection; Stage \uppercase\expandafter{\romannumeral 2} introduces contrastive learning fine-tuning and error correction mechanisms to fine-tune the embedding representations of entities and eliminate erroneous seeds that cause embedding conflicts; Stage \uppercase\expandafter{\romannumeral 3} ensures the precision of newly added seeds through neighborhood structure alignment constraints and secondary error correction verification. 

\textbf{In terms of coverage distribution optimization:} Stage \uppercase\expandafter{\romannumeral 1} implements regional sampling based on semantic clustering, forcing the model to cover representative entities of different clustering centers; Stage \uppercase\expandafter{\romannumeral 2} captures cross-clustering alignment seeds that may arise through global sampling, enriching the types of pseudo-seeds from a global perspective; Stage \uppercase\expandafter{\romannumeral 3} employs a neighborhood expansion strategy to propagate entities through graph structure, filling the coverage gaps in low-density areas and supplementing sparse entities.

% \subsection{Framework}
% As shown in Fig.~\ref{frame},
% PSQE aims to improve the quality of multimodal entity-aligned pseudo seeds by improving the distribution and precision of pseudo seeds through three stages of optimization. To answer objectives O1 and O2, in the first stage, we improve the precision through multimodal features and optimize the distribution of pseudo seeds by combining clustering methods~\cite{hamerly2003learning}. In the second stage, the pseudo seeds from the first stage are used to expand the pseudo seeds using entity feature reconstruction methods to optimize their distribution further, and at the same time, error correction is carried out with the help of multimodal entity features to remove some of the pseudo seeds to improve the precision. In the third stage, we resample some of the pseudo seeds by using the neighborhood entities to expand the pseudo seeds to optimize their distribution and apply the error correction method in the second stage to improve the precision of the final pseudo seed set.

\subsection{Stage \uppercase\expandafter{\romannumeral 1}: Multimodal Fusion \& Cluster Sampling}
In Stage \uppercase\expandafter{\romannumeral 1}, to improve the precision of pseudo-seeds, we represent entity features by integrating multimodal features, thereby reducing the bias caused by single-modal entity features. To optimize the coverage distribution, we cluster the entity information from two knowledge graphs and force the generation of pseudo-seeds in different semantic regions, thereby distributing the seeds.

\subsubsection{Multimodal Fusion}
Given that multimodal features can enhance the representation of an entity, we first encode its visual, relational, and attribute modalities and stitch them together to form a complete representation of the features of the entity. For the visual modality, we adopt a representative pre-trained visual model, denoted as $PreM$, as the encoder. We extract the output from the final layer before the logits to obtain the visual embedding $h^{v_i}$ for each visual instance $v_i$ associated with the entity $e^i$. In practice, following common settings in prior works~\cite{mclea,liu2021visual,chen2023meaformer}, we employ ResNet as the pre-trained encoder.
We use BERT for the semantic embedding of relationship and attribute modalities of entities. Since an entity may have multiple attributes $a_i$ and relations $r_i$, we average the semantic embeddings for multiple attributes and relations.
\begin{equation}
    \left\{ {\begin{array}{*{20}{c}}
{{h^{v_i}} = PreM({v^i})}\\
{{h^{a_i}} = AVE\left( {BERT({a_i[1]},{a_i[2]},...} \right)}\\
{{h^{r_i}} = AVE\left( {BERT({r_i[1]},{r_i[2]},...} \right)}
\end{array}} \right.
\end{equation}
The final entity representation $h^i$ can be obtained by stitching together these modal information:
\begin{equation}
    {h^i} = {h^{v_i}} \oplus {h^{a_i}} \oplus {h^{r_i}}
\end{equation}
where $\oplus$ denotes concatenation.

\subsubsection{Cluster Sampling}
To prevent the pseudo-alignment seeds from becoming overly concentrated in the knowledge graph, which could result in insufficient learning of more sparse entities, we employ the K-means clustering algorithm~\cite{hamerly2003learning} to partition both knowledge graphs into several sub-blocks. Then, within each sub-block, pseudo seed pairs are generated to explore more potential alignments, thereby promoting a more even graph coverage balance of the pseudo seeds.

Specifically, there are two knowledge graphs $G_1$ and $G_2$, which contain the sets of entities $\{ e_1^1, e_1^2, \dots\}$ and $\{ e_2^1, e_2^2, \dots \}$, where $e_1^i$ and $e_2^i$ denote the entities from graphs $G_1$ and $G_2$, respectively. To cluster the entities in the knowledge graph, we use the K-means clustering method, which divides the entities of the two atlases into $K$ clusters, denoted as ${ C_1, C_2, \dots, C_K }$, where $C_k$ denotes the $k$th cluster containing all the entities classified into that cluster.

In each cluster $C_k$, we compute its similarity $\text{Sim}(h_1^{i}, h_2^{j})$ for the pairs of entities $(e^i_1, e^j_2)$ from the graphs $G_1$ and $G_2$ within the cluster, where $e^i_1, e^j_2 \in C_j$.
Next, we sort all entity pairs based on the similarity values from high to low. To further select the pseudo-alignment seeds, we pick the top $m_j = \frac{\text{len}(C_j) \cdot n}{\text{len}(G_1) + \text{len}(G_2)}$ pairs of entities with the highest similarity values $(e_1^i, e_2^j)$ from each cluster $C_j$, where $n$ denotes the total number of pseudo seeds required. In particular, the entities in the selected entity pairs must satisfy the condition that they are not yet included in the current set of pseudo seeds $S$.
Then, the pseudo seed $S1$ of stage \uppercase\expandafter{\romannumeral 1} is obtained.
% See Appendix~\ref{algorim} for the specific algorithmic process.

\subsection{Stage \uppercase\expandafter{\romannumeral 2}: Global Sampling \& Error Correction}
% In the second stage, to achieve O1 and O2, we use contrastive learning to reconstruct the vector representation of entities to mine more pseudo seeds to expand the coverage in the knowledge graph. At the same time, the high accuracy property of multimodal features is exploited for error correction to compensate for the possible accuracy degradation caused by the optimized graph coverage.

In Stage \uppercase\expandafter{\romannumeral 2}, to optimize the seed coverage distribution, we first fine-tune the entity embeddings based on contrastive learning to make them more accurately reflect the semantic information of the entities. Then, we perform global sampling in the knowledge graph instead of sampling within clusters, which allows us to capture cross-cluster aligned seeds and expand the coverage of seeds. Meanwhile, to optimize the precision of seeds, we conduct error-checking on the expanded pseudo-seeds after fine-tuning the entity representations, identifying and removing pseudo-seed pairs that may conflict with those from Stage \uppercase\expandafter{\romannumeral 1}, thereby improving the overall quality of the seeds.

\subsubsection{Global Sampling} To enhance entity features and achieve resampling through pseudo seed expansion, we propose a feature enhancement method based on contrastive tuning. Specifically, we map and reconstruct the visual features, attribute features, and relationship features of entities independently to enhance the expressiveness of the features. For the visual feature $h^v$ of entities, it is first mapped by an independent linear layer, and then the neighborhood information is aggregated using Graph Attention Network (GAT)~\cite{gat} to enhance the expressiveness of the visual features. For the attributes and relations of entities, we use an independent fully connected layer to transform the attribute features and relation features of entities respectively, and get the new entity attribute features and relation features as follows:
\begin{equation}
    \left\{ {\begin{array}{*{20}{c}}
{{H^{v_i}} = GAT(linear({h^{v_i}}))}\\
{{H^{a_i}} = linear({h^{a_i}})}\\
{{H^{r_i}} = linear({h^{r_i}})}
\end{array}} \right.
\end{equation}

In addition, we train the model using ICL loss:
\begin{equation} loss = ICL(H_1^{v},H_2^{v}) + ICL(H_1^a,H_2^a) + ICL(H_1^r,H_2^r) \end{equation}
where $H_1^v$ and $H_2^v$ denote the entity vectors in the pseudo seed sets of the two knowledge graphs.

After obtaining the new entity features, we use these features to expand the pseudo seeds. The steps are as follows: First, the individual modal vectors of entity \( e_1^i \) are spliced to obtain the new entity feature vector \( H_1^{i} \).
  % : \[H_1^{{e^p}} = H_1^v \oplus H_1^a \oplus H_1^r\]
Next, the similarity between all pairs of entities in the two knowledge graphs is calculated. 
Based on similarity scores, the entity pairs that have not yet been added to the pseudo seed set are selected, and the \( n \) entity pairs with the highest similarity are selected as the new pseudo seeds ${S}_{new}$. 

\begin{figure}[!t]
\centering
\includegraphics[width=\linewidth]{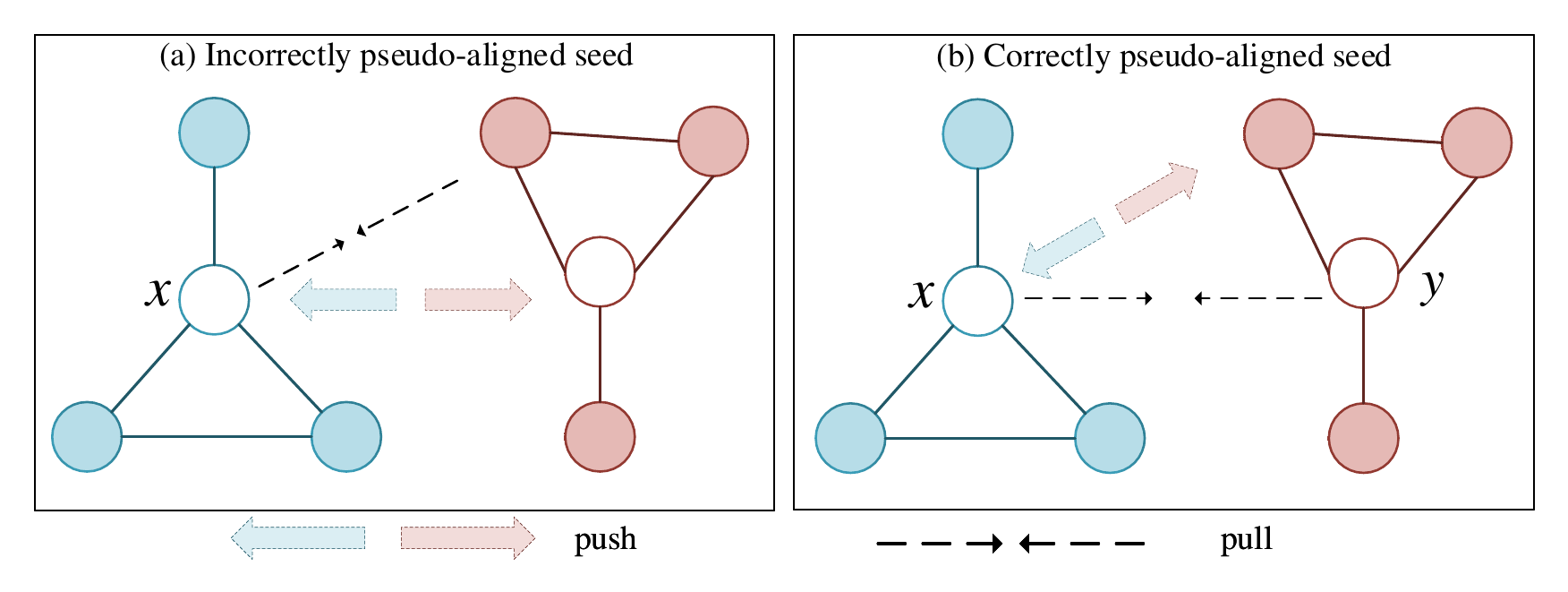}
\caption{Example of an attraction term for Remark 1: \( x \) and \( y \) are correct entity pairs.
Wrongly aligned seeds can push away vector representations from the correct entity, resulting in variability from the vector of the correct entity.
}
\label{figtheroem1}
\end{figure}

\subsubsection{Multimodal Information Correction}
After mining pseudo seeds, we propose a correction mechanism based on original multimodal features to improve the precision of the pseudo seed set.

Let $h_1 = [h_1^{1}, h_1^{2}, \dots, h_1^{n}]$, $h_2 = [h_2^{1}, h_2^{2}, \dots, h_2^{n}]$ denote the feature matrices of the entities in the pseudo seeded set $S_{new}$, respectively. To detect anomalies in the pseudo seeds, we compute the similarity matrix:
$M = h_1^{T} \cdot h_2$.
For each row \( k \) in the similarity matrix, we check the following condition:
$\text{M}_{kk} \neq \max_{l} \text{M}_{kl}$.
That is, we verify whether the diagonal element \( \text{M}_{kk} \) is the largest value in row \( k \). If the condition is satisfied, the corresponding pseudo seed \((e_1^k, e_2^k), (e_1^k, e_2^l)\) is deemed questionable. Such pseudo seeds are then removed from the pseudo seed set.
After this screening process, the updated pseudo seed set is $S2$. 

\subsection{Stage \uppercase\expandafter{\romannumeral 3}: Neighborhood Expansion  \& Rechecking}
In Stage \uppercase\expandafter{\romannumeral 3}, to optimize the coverage distribution, we conduct local area sampling through neighborhood expansion, thereby supplementing sparse entities and making the sample distribution more balanced. After optimizing the distribution, precision usually declines. To enhance precision, we employ the error correction mechanism from Stage \uppercase\expandafter{\romannumeral 2  } for secondary error correction.

\subsubsection{Neighborhood Expansion}
Since the neighboring entities of aligned entities in pseudo-alignment seed pairs are usually semantically or relationally similar~\cite{zhu2017iterative}, we speculate that their neighboring entities can also form alignment relationships. Therefore, we increase the number of pseudo-alignment seeds by expanding the pseudo-alignment seed pairs of neighboring entities and, at the same time, supplementing the information of the knowledge graph around the scattered entities to enhance the attention of the originally scattered entity part in the gradient update.
Meanwhile, combined with the error correction strategy in Stage \uppercase\expandafter{\romannumeral 2}, it can effectively alleviate the problem of degradation of pseudo seed precision that may occur during the resampling process. 

Specifically, for each pseudo seed pair $(e_1^i, e_2^i)$, we compute the similarity of the entity representation of its neighbors:
\begin{equation}
\small
    \text{Sim}(i', j') = {(h_1^{i'} \oplus H_1^{i'})^T}(h_2^{j'} \oplus H_2^{j'}) \quad \text{for} \quad i' \in \mathcal{N}(e_1^i), j' \in \mathcal{N}(e_2^i)
\end{equation}
where $\mathcal{N}(e_1^i)$ and $\mathcal{N}(e_2^i)$ denote the sets of neighboring entities of entities $e_1^i$ and $e_2^i$ respectively.
Next, we select entity pairs $(i', j')$ from the similarity computation results and filter the entity pairs that satisfy the following conditions: similarity higher than the safety threshold $\eta$, entities in the entity pairs do not appear in the pseudo seed set $S_2$.

\begin{figure}[!t]
\centering
\includegraphics[width=\linewidth]{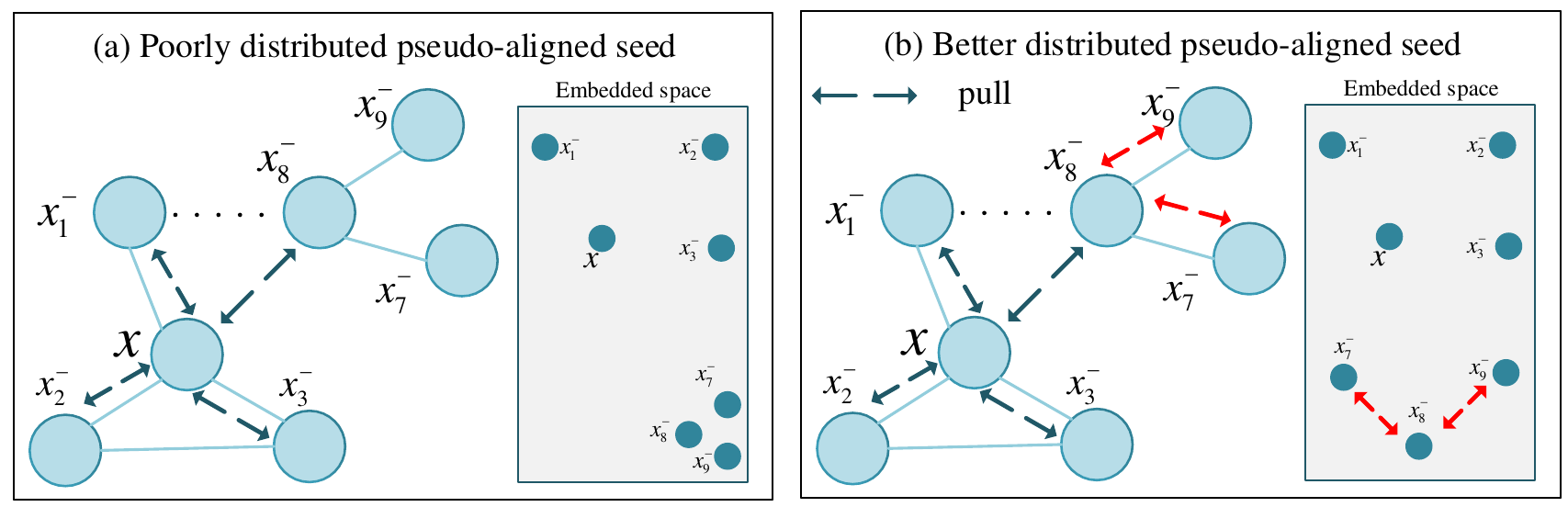}
\caption{Examples of a repulsion term of Remark 2: $x^-$ is a neighboring entity, and red arrows indicate new negative samples. Negative samples were not added to $x_7^-$, $x_9^-$ in the left figure.
Unbalanced graph coverage leads to focusing on $x_1^-$, $x_2^-$, and $x_3^-$ during optimization and under-optimization of entities in the region around $x_8^-$, which are difficult to distinguish, e.g., the bottom right corner of Fig. (a).
}
\label{figtheorem2}

\vspace{-1.5em}
\end{figure}

\subsubsection{Rechecking}Finally, we apply the error correction method used in Stage \uppercase\expandafter{\romannumeral 2} to the expanded pseudo seed set and make further corrections to obtain the final pseudo seed set $S3$.

\section{Theoretical Analysis}
\label{analysis}
In this section, we focus on analyzing the effect of the quality of pseudo seeds on a multimodal entity alignment model (such as MCLEA, MEAformer) based on contrastive learning.

To illustrate the ICL optimization process, we conducted an in-depth analysis of the lower bound of ICL, highlighting key factors influencing the optimization process.

\begin{theorem}
\label{theorem1}
     Assuming that a regularization function is applied to the embedding representation of an entity, let $S = \{ (h_1^1,h_2^1), \ldots, (h_1^n,h_2^n)\}$ be the set of pseudo seeds where $h_1^n$ denotes the vector representation of entity $n$ in knowledge graph $1$ and $ ||h|| = 1$. Then, the lower bound of ICL of batch $B$ is:
 \begin{equation}
{L^{ICL}} \ge \sum\limits_{i \in B} {\log (1 + } D\exp (\underbrace {\frac{1}{2}||h_1^i - h_2^i||}_{attraction} - \underbrace {\frac{1}{{2D}}\sum\limits_{j \in {\mathcal N}_i^{ng}} {||h_1^i - {h^j}||} }_{repulsion} - \frac{1}{D}))
 \end{equation}
 where ${\mathcal N}_i^{ng}$ denotes the set of negative samples for entity i, and $D=|B|-1$ corresponds to the number of negative instances in the mini-batch, excluding the anchor entity itself.
\end{theorem}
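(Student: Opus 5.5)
The plan is to bound each directional probability $p(h_1^i,h_2^i)$ in Eq.~(\ref{pm}) from above. Since $-\log$ is decreasing, this gives a lower bound on $L^{\mathrm{ICL}}$ in Eq.~(\ref{icl}). Summing over the anchors $i\in B$ then gives the stated batch bound. I will take $\tau=1$, or equivalently absorb $\tau$ into the normalised embeddings, since the statement carries no $\tau$.

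\textbf{Step 1 (rewrite $p$).} Divide numerator and denominator by $\zeta(h_1^i,h_2^i)$ to get
\begin{equation*}
p(h_1^i,h_2^i)=\Bigl(1+\sum_{j\in\mathcal N_i^{ng}}\exp\bigl({h_1^i}^{\top}h^j-{h_1^i}^{\top}h_2^i\bigr)\Bigr)^{-1}.
\end{equation*}
The reverse direction $p(h_2^i,h_1^i)$ has the same form. The same argument applies to it, and a bound of the same shape for both directions bounds their average, so $-\log\frac12(p+p')$ is at least the common bound.

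\textbf{Step 2 (convert inner products to distances).} Because $\|h\|=1$, we have ${h_a}^{\top}h_b=1-\frac12\|h_a-h_b\|^2$. Hence each exponent equals $\frac12\|h_1^i-h_2^i\|^2-\frac12\|h_1^i-h^j\|^2$.

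\textbf{Step 3 (Jensen).} Since $\exp$ is convex and $|\mathcal N_i^{ng}|=D$,
\begin{equation*}
\sum_{j}\exp(x_j)\ge D\exp\Bigl(\frac1D\sum_j x_j\Bigr).
\end{equation*}
This pulls the sum of negatives inside a single exponential. It produces the factor $D$ in front and the averaged repulsion $\frac1{2D}\sum_j(\cdot)$. It yields
\begin{equation*}
L^{\mathrm{ICL}}\ge\log\Bigl(1+D\exp\bigl(\tfrac12\|h_1^i-h_2^i\|^2-\tfrac1{2D}\textstyle\sum_j\|h_1^i-h^j\|^2\bigr)\Bigr).
\end{equation*}

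\textbf{Step 4 (main obstacle: squared distances to plain norms and the constant $-1/D$).} I must pass from squared distances to the unsquared ones in the statement, and produce the constant $-\frac1D$. Because $\log(1+D e^{x})$ is increasing in $x$, it suffices to lower-bound the exponent.

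For the repulsion part, $\|h_1^i-h^j\|\le 2$ gives $\|h_1^i-h^j\|^2\le 2\|h_1^i-h^j\|$. That bound goes in the favourable direction.

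For the attraction part, I would try an elementary inequality of the form $t^2\ge t-c$ on $[0,2]$, for example $t^2\ge t-\frac14$. Any leftover constants would then be absorbed into the $-\frac1D$ term. If this does not close exactly, I would instead track one diagonal contribution, or the normalisation introduced by the regulariser, to generate the $-\frac1D$. I would also be prepared to state the bound with an adjusted constant.

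This constant bookkeeping is where I expect the difficulty. Steps 1--3 are routine, but matching the precise coefficients of the statement may require a slightly looser, or differently normalised, version of the inequality.
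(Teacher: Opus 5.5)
\textbf{Step 4 cannot be completed, because the stated inequality is false.} Your Steps 1--3 follow the paper's route: divide by the positive term, apply Jensen to $\exp$ over the $D$ negatives, and use $\|h\|=1$. You actually carry out the last step more carefully than the paper does. The paper reaches unsquared norms only through the identity $\|h^i-h^j\|=1-2{h^i}^{\top}h^j$, which is false (it already fails at $h^i=h^j$; the correct identity is $\|h^i-h^j\|^2=2-2{h^i}^{\top}h^j$). It also appends the constant $-\tfrac1D$ without any derivation. A smaller point: the paper only treats the one-directional loss. Your averaging remark does not bridge this gap, since all you get is $-\log\tfrac12(p+p')\ge\min\{-\log p,-\log p'\}$. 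The reverse-direction bound has $h_2^i$ as anchor and $h_1^j$ as negatives, so there is no ``common bound''.

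No constant bookkeeping can close Step 4, because with unsquared norms the inequality fails. Take $\tau=1$, $D\ge 2$, $|B|=D+1$. Let $u_1,\dots,u_{D+1}$ be the vertices of a regular simplex on the unit sphere, so ${u_i}^{\top}u_j=-\tfrac1D$ for $i\neq j$. Set $h_1^i=h_2^i=u_i$ and $\mathcal N_i^{ng}=\{h_2^j: j\neq i\}$.

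The two directions then coincide, so each summand of $L^{ICL}$ equals $\log(1+De^{-1-1/D})$. On the right-hand side of the theorem, the attraction is $0$ and the repulsion is $\tfrac12\sqrt{2+2/D}=\sqrt{(D+1)/(2D)}<1$. The claimed exponent is therefore $-\sqrt{(D+1)/(2D)}-\tfrac1D>-1-\tfrac1D$, so the claimed lower bound exceeds the loss at every anchor. For $D=2$ (three unit vectors at $120^{\circ}$), the loss is $0.369$ per anchor against a claimed bound of $0.412$.

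Your own inequalities show where the argument breaks. The bound $\|h_1^i-h^j\|^2\le 2\|h_1^i-h^j\|$ points the right way, but it produces $-\tfrac1D\sum_j\|h_1^i-h^j\|$, which is twice the claimed repulsion. The shortfall, $\tfrac1{2D}\sum_j\|h_1^i-h^j\|$, can be as large as $1$, which $-\tfrac1D$ cannot absorb.

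What is actually provable is exactly your Step 3 bound, for one direction. If the norms in the theorem are read as squared norms, the statement follows from it at once, since the extra $-\tfrac1D$ only weakens the right-hand side. That is the version you should state, and you should stop there.
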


\textit{Proof: See Appendix~\ref{proof1}.}

\begin{table*}[]
\centering
\caption{Comparison of applying PSQE to four recent representative open-source unsupervised MMEA methods (EVA, MCLEA, MEAformer, PCMEA) on the DBP15K dataset.
`MEAformer+PSQE' denotes the application of PSQE to MEAformer.
`Unsup' indicates whether the method is unsupervised or not.
It should be noted that the experimental results of EVA, MCLEA, MEAformer, and PCMEA deviate from the original paper in terms of specific values due to the differences in the setting of the initialization seeds.
}

\begin{tabular}{cccccccccccc}
\hline
\multirow{2}{*}{}       & \multirow{2}{*}{Model} & \multirow{2}{*}{Unsup} & \multicolumn{3}{c}{\textit{DBP15K}$_{ZH-EN}$}           & \multicolumn{3}{c}{\textit{DBP15K}$_{JA-EN}$}           & \multicolumn{3}{c}{\textit{DBP15K}$_{FR-EN}$}           \\
                        &                        &                        & H@1           & H@10          & MRR           & H@1           & H@10          & MRR           & H@1           & H@10          & MRR           \\ \hline

\multirow{8}{*}   & RDGCN {\tiny (IJCAI, 2019)}                 & ×                      & .708          & .846          & .746          & .767          & .895          & .812          & .886          & .957          & .911          \\
                        & RPR-RHGT {\tiny (IJCAI, 2022)}             & ×                      & .693          & -             & .754          & .886          & -             & .912          & .889          & -             & .919          \\
                        & DESAlign {\tiny (ICDE, 2024)}             & x                      
                        & .826          & \textbf{.972}          & \textbf{.885}         & .811          & .963      & .869          & .810          & .957          & .865          \\
                        \hdashline
                        
                        & EVA {\tiny (AAAI, 2021)}                   & \checkmark
                        & .783          & .898          & .825          & .861          & .946          & .892          & .914          & .972          & .936          \\
                        \rowcolor{gray!20}
                         & \textbf{EVA+PSQE}                   & \checkmark          
                        & .823 & .930 
                        & .861 & .873 
                        & .955 & .904 
                        & .928 & .977 
                        & .948 \\
                        \hdashline
                        & MCLEA {\tiny (COLING, 2022)}                   & \checkmark                      & .792          & .906          & .834          & .866          & .956          & .900          & .908          & .972          & .933          \\
                        \rowcolor{gray!20}
                        & \textbf{MCLEA+PSQE}                 & \checkmark                      %& .835          & .940          & .874          & .879          & .964          & .911          & .929          & \textbf{.987} & .951          \\
                        & .835 & .940 
                        & .874 & .879 
                        & .964 & .911 
                        & .929 &  \textbf{.987} 
                        & .951 \\
                        \hdashline
                        & MEAformer {\tiny (MM, 2023)}             & \checkmark                      
                        & .804          & .923          & .847          & .872          & .960      & .905          & .918          & .980          & .942          \\
                        \rowcolor{gray!20}
                        & \textbf{MEAformer+PSQE }            & \checkmark                     % & \textbf{.842} & \textbf{.947} & \textbf{.882} & \textbf{.892} & \textbf{.968} & \textbf{.921} & \textbf{.932} & .986          & \textbf{.953} \\ \hline
                        & \textbf{.842} 
                        & .947 
                        & .882 
                        & \textbf{.892} 
                        & \textbf{.968} 
                        & \textbf{.921} 
                        & \textbf{.932} 
                        & .986 
                        & \textbf{.953} \\ 
                        \hdashline
                        & PCMEA {\tiny (AAAI, 2024)}             & \checkmark                      
                        & .759          & .894          & .807          & .833          & .942      & .873          & .904          & .974          & .930          \\
                        \rowcolor{gray!20}
                        & \textbf{PCMEA+PSQE }            & \checkmark                     
                        & .816
                        & .934 
                        & .859 
                        & .860
                        & .956
                        & .895
                        & .917 
                        & .981
                        & .941 \\ 
\hline
\end{tabular}%

\label{dbpQ1}

\end{table*}

The derivation of the above theory was inspired by~\cite{khosla2020supervised, zhu2022balanced}, theorem~\ref{theorem1} shows that the lower bound of ICL consists of two components: an attraction term and a repulsion term.

\begin{remark}
The attraction term enforces the representations of positive samples in pseudo-alignment seeds to be as similar as possible. Therefore, the precision of the pseudo-alignment seeds plays a critical role in the effectiveness of model training.
\label{remark1}
\end{remark}

To reduce the overall loss, the attraction term should be minimized. That is, the representations of the positive samples in the pseudo-aligned seeds should be as similar as possible. However, when the pseudo-aligned seed pairs are mismatched, the attraction term drives the two entities that should not have been aligned to be constantly close to each other, thus increasing the difficulty for the model in distinguishing the correct entity pairs.
For example, Fig.~\ref{figtheroem1} (b) illustrates the case of incorrect entity pairs versus correct entity pairs. When incorrect entity pairs occur, they lead to incorrect attraction, which essentially pushes the vector representations of correct entity pairs apart, causing the model to introduce bias.
Therefore, to improve the precision of the model, it is necessary to improve the precision of the pseudo-aligned seed pairs and reduce the occurrence of false matches.  
For example, Fig.~\ref{figtheroem1} (a) shows that when a correct entity pair appears, it actually corresponds to distancing from the negative sample.
\begin{remark}
    The repulsion term adjusts the distance between entities in pseudo seeds and their negative samples. 
    When the coverage of the negative sample graph is unbalanced, a more concentrated set of entities distributed in the graph produces a more significant gradient, which makes the loss function more inclined to focus on these entities and can lead to a biased representation of the feature space in the model during training. Therefore, pseudo-seeds need to be evenly covered in the graph to avoid the under-optimization of sparse entities.
\vspace{-1em}
\end{remark}
\textit{Proof: The gradient update process is detailed in Appendix~\ref{proof2}.}

To reduce the overall loss, the effect of the repulsion term needs to be as large as possible. That is, the differences between negative samples in pseudo-alignment seeds should be maximized to effectively distinguish the representations of different entities. However, issues arise when the graph coverage of entities in pseudo-alignment seeds is uneven. For example, as shown in Fig.~\ref{figtheorem2} (a), some negative sample entities are scattered, while in Fig.~\ref{figtheorem2} (b), other entities are more concentrated (with negative samples including $x_9^-$, $x_7^-$). This results in more concentrated entities (such as $x$, $x_1^-$, $x_2^-$, $x_3^-$, $x_4^-$) having a larger weight in the repulsion term. This weight difference can lead to the gradient of concentrated entities being significantly larger than that of scattered entities, causing the loss function to focus more on the concentrated entities during optimization. Therefore, entities (such as $x_7^-$, $x_8^-$, $x_9^-$) can become harder to distinguish in the embedding space, leading to lower model training performance.

Analyzing from the theoretical level, PSQE strengthens the optimization effect of the attraction term and the repulsion term in Theorem~\ref{theorem1}.
PSQE enhances the precision of pseudo-seeds, thereby achieving a tighter alignment of entities and optimizing the attraction term. Meanwhile, by balancing the distribution of negative samples, it optimizes the repulsion term.

\begin{table}[htbp]
\centering
\caption{Comparison results on the DWY15K dataset.}
\resizebox{0.47\textwidth}{!}{%
\begin{tabular}{ccccccc}
\hline
\multirow{2}{*}{Method} & \multicolumn{3}{c}{DW-V1}                & \multicolumn{3}{c}{DW-V2}                \\  
                        & H@1           & H@10          & MRR           & H@1           & H@10          & MRR           \\ \hline

COTSAE \tiny{(AAAI, 2020)}                 &.709          & .904          & .77           & .922          & .983          & .940          \\
\hdashline
EVA                     & .880          & .965          & .914          & .880          & .963          & .913          \\
\rowcolor{gray!20}
\textbf{EVA+PSQE}           %& .915 & .971 & .937 & .900 & .972 & \.929 \\
                       &.915 & .971 & .937  &.900 & .972 & .929 \\
\hdashline
MCLEA                   & .903          & .975          & .932          & .896          & .970          & .926          \\
\rowcolor{gray!20}
\textbf{MCLEA+PSQE}         %& .912 & .975 & .937 & .921 & .977 & \.943 \\
                        & .912 
                        & .975 
                        & .937 
                        & .921 
                        & .977 
                        & .943 \\
  \hdashline          
MEAformer               & .928          & .983          & .950          & .917          & .981          & .944          \\ \hline

\rowcolor{gray!20}
\textbf{MEAformer+PSQE}     %& \textbf{.954} & \textbf{.984} & \textbf{.966} & \textbf{.939} & \textbf{.984} & \textbf{.957} \\ \hline
                        & \textbf{.954}
                        & \textbf{.984} 
                        & \textbf{.966} 
                        & \textbf{.939} 
                        & \textbf{.984} 
                        & \textbf{.957} \\
                        \hline

\end{tabular}%
}
\label{DWv1Q1}
\vspace{-0.5em}
\end{table}

\section{Experiments}
\subsection{Experimental settings}
\subsubsection{Datasets}
Our experiments are conducted on two types of real-world multimodal datasets: DBP15K~\cite{sun2017cross} and DWY15K~\cite{guo2019learning}.
DBP15K is a widely used cross-language multimodal entity alignment benchmark dataset containing three bilingual entity alignment benchmarks: French-English (FR-EN), Japanese-English (JA-EN), and Chinese-English (ZH-EN).
DWY15k is a monolingual multimodal entity alignment dataset, which is an EA focusing on DBpedia-Wikidata, and mainly contains two subsets with different levels of sparsity: DW-V1 and DW-V2. DW-V1 has a higher sparsity than DW-V2.
% See Appendix Table~\ref{dataset} for detailed data set statistics.
Given that the existing multimodal unsupervised methods have not yet been experimented on the FB15K-DB15K~\cite{MMKG} and FB15K-YAGO15K datasets~\cite{MMKG}, the relevant experimental validations are not included in this paper.

\subsubsection{Evaluation Metrics}
The most commonly used metrics for evaluating the performance of knowledge graph embedding models are $Hits@\textit{n}$ and $MRR$. Higher values of $Hits@n$ ($H@n$) and $MRR$ indicate better performance. 
Hits@\textit{n} denotes the average percentage of triples ranked less than or equal to \textit{n} in link prediction and is calculated as follows:
\begin{equation}
{{Hits}}@n = \frac{1}{{|S|}} \sum\limits_{i = 1}^{|S|} \left( rank_i \le n \right),
\end{equation}
where \( S \) is the set of triples, \( |S| \) denotes the number of elements in this set of triples, and \( rank_i \) is the link prediction rank of the \( i \)th triple. MRR represents the mean reciprocal rank of correct entities, calculated by the following formula:
\begin{equation}
\scalebox{0.9}{$
\begin{aligned}
{MRR} &= \frac{1}{|S|} \sum\limits_{i = 1}^{|S|} \left(\frac{1}{{rank}_i}\right) \\
           &= \frac{1}{|S|} \left( \frac{1}{{rank}_1} + \ldots + \frac{1}{{rank_{|S|}}} \right)
\end{aligned}
$}
\label{pd}
\end{equation}

\textbf{Seed Quality Evaluation.}
In order to better measure the quality of pseudo seeds, we evaluate them in terms of both their precision and graph coverage  balance, which can be formalized as:
\begin{equation}
    \left\{ {\begin{array}{*{20}{c}}
{precise = \frac{{{S_t}}}{S}}\\
{graph coverage  = \frac{{{S_a}}}{{2*{G_n}}} + \frac{{Edge}}{{2*G\_Edge}} + \frac{{{S_f}}}{{{G_n}}}}
\end{array}} \right.
\label{metric}
\end{equation}
Where $S_t$ denotes the number of correct seeds in the pseudo seeds, $S$ denotes the total number of pseudo seeds, $S_a$ denotes the number of aggregated entities in the pseudo seed, $S_f$ denotes the number of scattered entities in the pseudo seed, and $Edge$ and $G\_Edge$ denote the number of edges that the entities in the pseudo seed contain in the graph, as well as the total number of all edges in the graph, respectively.
When all of the pseudo seeds are correct, the precision value reaches 1. When the number of pseudo seeds reaches the total number of entities in the atlas, the graph coverage value will reach 1.

\subsubsection{Implementation Details}
All experiments were conducted on NVIDIA A100 GPUs.
To ensure the reproducibility of the experiments, we set the random number seed to 42, the visual coder is ResNet-152 which is consistent with EVA, MCLEA, and MEAformer, and the visual feature dimension is 2048, the number of initialized seeds is set to 1000 on the three datasets of DBP15K, and the number of initialized seeds is set to 5000 on the two datasets of DWY15K. In Stage \uppercase\expandafter{\romannumeral 1}, the search space for the number of clusters is from 2 to 5, and the weights of each modality in generating the seeds are 0.8 for vision, 0.1 for a relationship, and 0.1 for the attribute. In Stage \uppercase\expandafter{\romannumeral 2}, the model's training rounds are 300, the modal vector dimensions are 300, the learning rate is 0.01, and the batch size is 2000. In Stage \uppercase\expandafter{\romannumeral 3}, the neighborhood seed similarity threshold is initially set to 0.8. For the entities that do not have a visual, we agree with the methods~\cite{liu2021visual,mclea} by assigning them a random vector sampled from a normal graph coverage parameterized by the mean and standard deviation of the other visuals.
\subsection{Baselines}
The baseline methods include nine recent entity alignment algorithms: RDGCN~\cite{ijcai2019p733}, RPR-RHGT~\cite{cai2022entity}, COTSAE~\cite{yang2020cotsae}, DESAlign~\cite{desalign}, EVA~\cite{liu2021visual}, MCLEA~\cite{mclea}, MEAformer~\cite{chen2023meaformer}, and PCMEA~\cite{wang2024pseudo}.
Among them, RDGCN, COTSAE, and RPR-RHGT belong to unimodal supervised entity alignment methods, while DESAlign, EVA, MCLEA, and MEAformer are multimodal entity alignment methods. Since there are fewer studies on unsupervised multimodal entity alignment and mainstream methods mainly focus on EVA, MCLEA, PCMEA, and MEAformer, our experiments are to investigate the improvement of our proposed PSQE with these four methods. Specifically, we use PSQE to generate seeds and use these four methods to obtain the final results.
In addition, to ensure the consistency of the experimental configurations, we experimentally compare MEAformer, MCLEA, PCMEA, and EVA under a uniform setup, while the experimental results of the remaining compared methods are directly quoted from the original paper.

\vspace{-0.5em}

\subsection{Overall Results}

\begin{tcolorbox} [
    colback=blue!5!white,
    boxsep=0.3pt 
] %个最朴素的 tcolorbox 环境
RQ1: How well does PSQE boost the pseudo seed quality on multimodal entity alignment models?
\end{tcolorbox}

To address Research Question 1, we conducted experiments on five sub-datasets (JA-EN, ZH-EN, FR-EN, DW-V1, DW-V2) and analyzed the effectiveness of our algorithms through specific cases. The experimental results show that PSQE significantly improves the quality of pseudo seeds and indirectly improves the effectiveness of existing unsupervised entity alignment methods.
\subsubsection{Main results}
As shown in Table~\ref{dbpQ1} and Table~\ref{DWv1Q1}, we conducted experiments on five multimodal sub-datasets using three classic unsupervised multimodal entity alignment models (i.e., MEAformer, EVA, MCLEA) with the pseudo seeds generated by the PSQE strategy. The results indicate that applying the PSQE strategy improves the training performance of all three multimodal entity alignment models. For instance, in Table~\ref{dbpQ1}, When surface information (e.g., entity names and character information) is utilized in the MEAformer, the application of PSQE on the MEAformer model improved the $H@1$ score on the ZH-EN, JA-EN, and FR-EN datasets from 80.4\%, 87.2\%, and 91.8\% to 84.2\%, 89.2\%, and 93.2\%, respectively, corresponding to improvements of 3.8\%, 2.0\%, and 1.4\%. In Table~\ref{DWv1Q1}, for the two sub-datasets of DWY15K, the application of PSQE to the EVA, MCLEA, and MEAformer models led to an increase of over 0.8\% in terms of the $H@1$ score. These results suggest that PSQE, by enhancing the quality of pseudo seeds, can effectively promote the models to learn more accurate and richer information across different datasets and MMEA models. 
\begin{figure}[htbp]
\centering
\includegraphics[width=\linewidth]{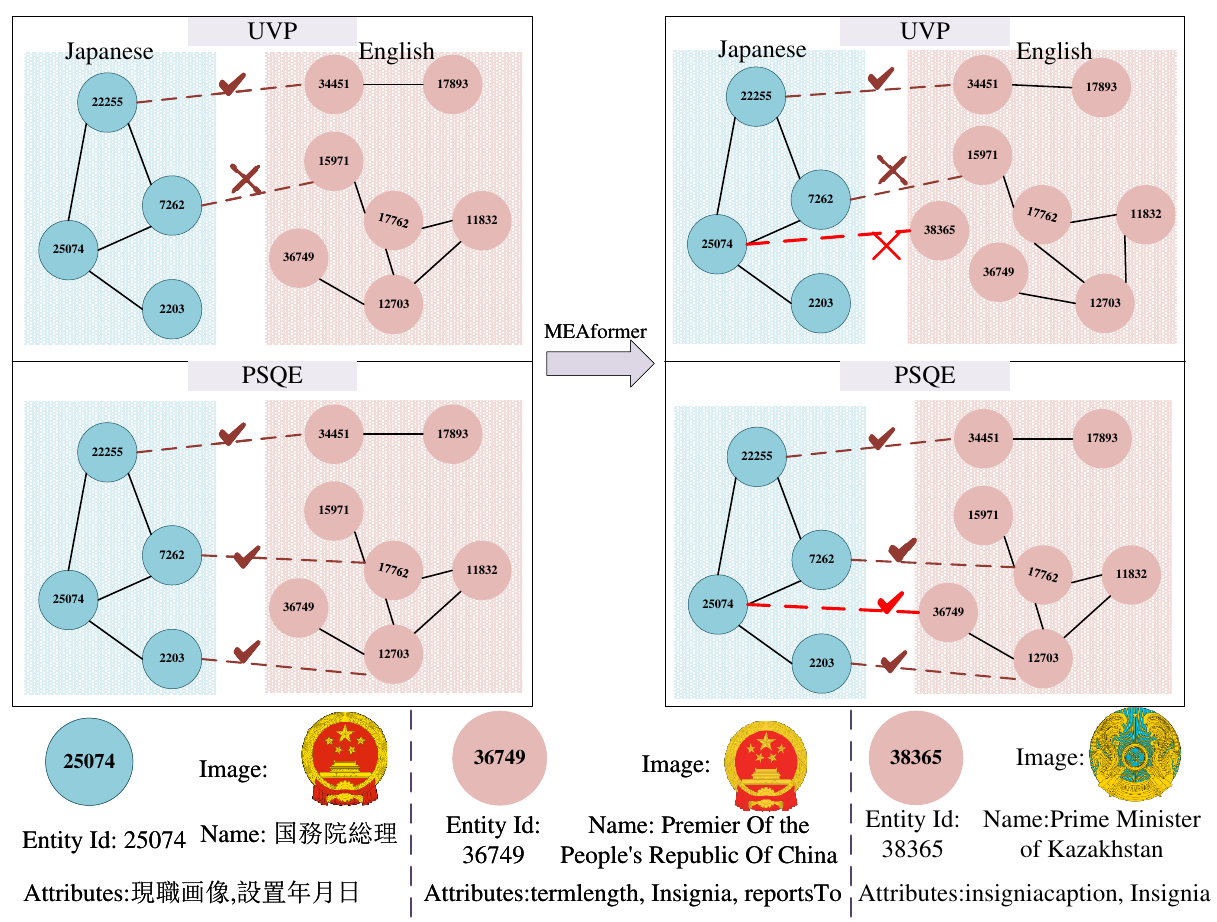}
\caption{Case study of the results of MEAformer's comparison on the unsupervised setting of PSQE vs. UVP on the JA-EN dataset, where correct entity pairs are marked with `\checkmark' and incorrect ones with `×'.}
\label{case_study}
\vspace{-1em}
\end{figure}

\subsubsection{Case study}
Fig.~\ref{case_study} shows a specific case of the comparison between PSQE and UVP to generate quality seeds on the JA-EN dataset using the MEAformer model. In the local knowledge graph in the figure, the pseudo seed pairs generated by UVP include `22255-34451' and `7262-15971', while the pseudo seed pairs generated by PSQE include `22255-34451', `7262-17762', and `2203-12703'. In subsequent model training, UVP produced incorrect matches. For instance, `25074' was incorrectly associated with `38365: Prime Minister of Kazakhstan', whereas PSQE, on the other hand, caused MEAformer to correctly recognize `36749: Premier of the People's Republic of China'. This is mainly because UVP generates the wrong pseudo seed pair `7262-15971' with a limited graph coverage of pseudo seed pairs, which fails to effectively guide the structural information of the figure in the model.
PSQE, on the other hand, ensures the correct transmission of graph structural information through accurate pseudo seed pairs (e.g., `2203-12703'), thus avoiding wrong predictions during training and improving the model performance.

\begin{figure}[]
\centering
\includegraphics[width=\linewidth]{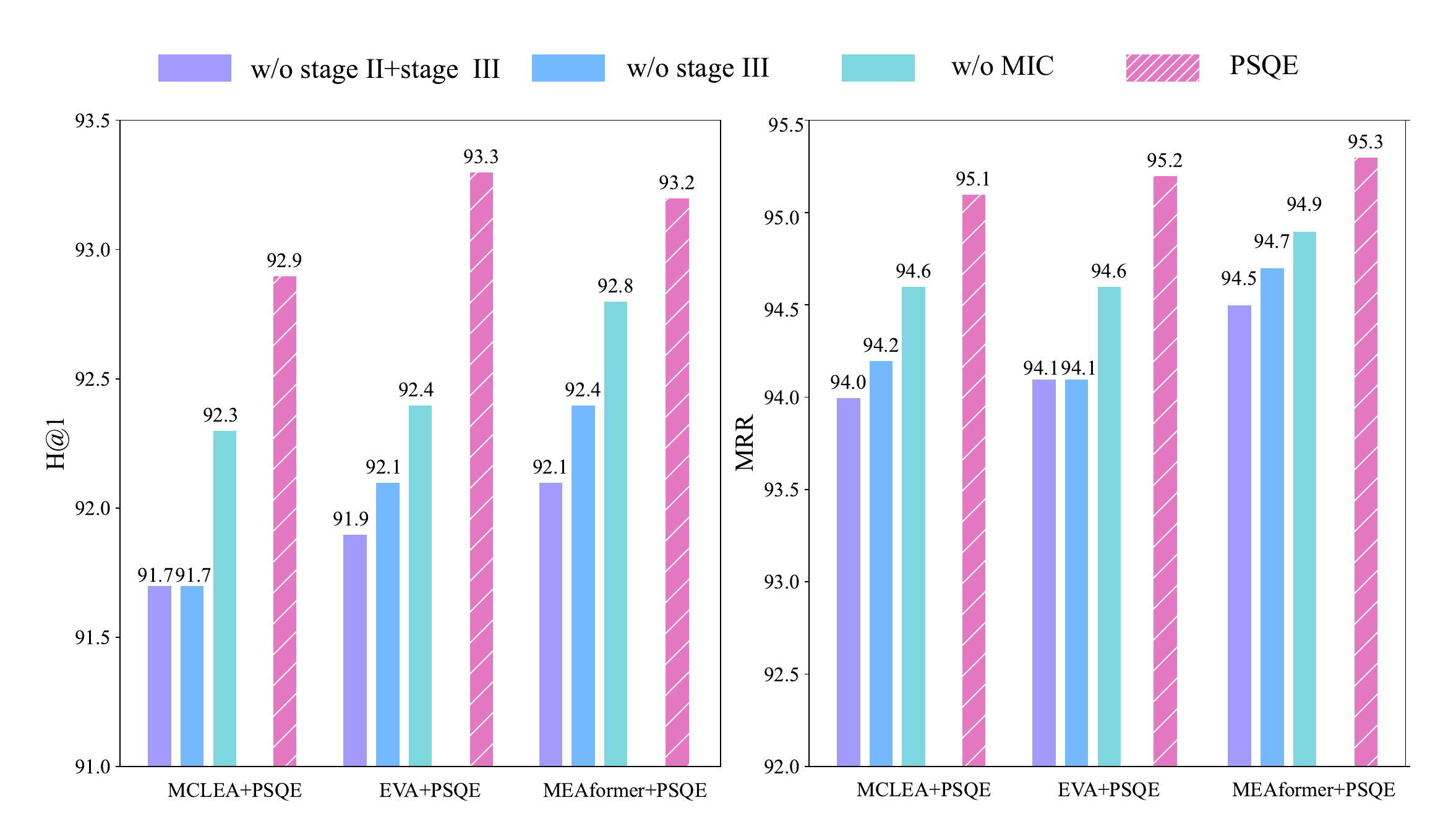}
\caption{Ablation study of PSQE on the FR-EN dataset. }
\label{RQ22}
\vspace{-1em}
\end{figure}

\subsection{Ablation Study}
\begin{tcolorbox} [
    colback=blue!5!white,
    boxsep=0.3pt 
] %个最朴素的 tcolorbox 环境
RQ2: How do the precision and the graph coverage of pseudo seeds affect the alignment results of MMEA? 
Which modality has the most significant impact on the pseudo seed generation process?
\end{tcolorbox}
% To address Research Question 2, we analyzed the impact of different components of PSQE on the results. Fig.~\ref{RQ22} and Table~\ref{tab:comparison} show the results of the ablation experiments on the FR-EN and JA-EN, respectively. The experimental results show that both the accuracy and graph coverage of pseudo seeds are important and that the visual information helps to improve the model performance.

\begin{table}[!t]
\centering
\caption{\textbf{Component analysis} for PSQE on MEAformer, EVA, and MCLEA models under the JA-EN dataset, where ``w/o rel” denotes the removal of relational information from the PSQE, ``w/o attr” denotes the removal of the attribute information and ``w/o visual'' denotes the removal of the visual information.}

\begin{tabular}{cccccll}
\hline
\multirow{2}{*}{} & \multicolumn{2}{c}{MEAformer} & \multicolumn{2}{c}{EVA} & \multicolumn{2}{c}{MCLEA} \\
\cmidrule(lr){2-3} \cmidrule(lr){4-5} \cmidrule(lr){6-7}
 & H@1 & MRR & H@1 & MRR & \multicolumn{1}{c}{H@1} & \multicolumn{1}{c}{MRR} \\ \hline
PSQE & \textbf{.892} & \textbf{.921} & \textbf{.878} & \textbf{.907} & \textbf{.879} & \textbf{.911} \\
PSQE (w/o rel) & .889 & .919 & .870 & .901 & .877 & .908 \\
PSQE (w/o attr) & .891 & .921 & .871 & .902 & .877 & .909 \\ 
PSQE (w/o visual) & .732 & .795 &.603 &.667 & .668 & .735 \\
\hline

\end{tabular}
\label{tab:comparison}
\vspace{-1em}
\end{table}

\subsubsection{The importance of pseudo seed precision.} As shown in Fig.~\ref{RQ22}, in the scenario of exploring pseudo seeded precision, we designed several experiments, i.e., removing some multimodal information, which includes removing the multimodal error correction part of Stages \uppercase\expandafter{\romannumeral 2} and \uppercase\expandafter{\romannumeral 3} (\textbf{w/o MIC}). Our results show that both removing multimodal error correction and using only unimodal information negatively affect the final alignment results. For example, with the MEAformer, the removal of multimodal error correction resulted in a decrease of $H@1$ by 0.4\%, respectively. These results indicate that the multimodal information has a good effect on improving the precision of pseudo seeds and that the precision of pseudo seeds is a part of the quality of pseudo seeds.

\subsubsection{Importance of pseudo seed graph coverage balance.} In the scenarios where we studied the pseudo seed graph coverage, we mainly designed the stage of removing PSQE to control the impact of the graph coverage balance, which includes experiments with removing Stage \uppercase\expandafter{\romannumeral 3} and removing Stage \uppercase\expandafter{\romannumeral 2}+Stage \uppercase\expandafter{\romannumeral 3}. In the case of removing Stage \uppercase\expandafter{\romannumeral 3}, the $MRR$ decreased by \textbf{1.1\%, 0.9\% } and \textbf{0.6\%} when PSQE was applied to EVA, MCLEA, and MEAformer, respectively.
And with Stage \uppercase\expandafter{\romannumeral 2} + Stage \uppercase\expandafter{\romannumeral 3} removed, $MRR$ decreased by \textbf{1.1\%}, \textbf{1.1\%} and \textbf{0.8\%}, respectively. The experimental results show that both Stage \uppercase\expandafter{\romannumeral 2} and Stage \uppercase\expandafter{\romannumeral 3} can be effective by enhancing the graph coverage and that the coverage distribution of pseudo seeds is equally an important component of the pseudo seed quality.
Although Stage II in MCLEA has little effect on $H@1$, it significantly improves efficiency and pseudo seed quality, reduces seed dependence, and maintains or enhances overall performance.
%, with details provided in the Appendix~\ref{Impact of Stage II}.

\subsubsection{Importance of visual information.}
Table ~\ref{tab:comparison} shows the performance comparison of PSQE with MEAformer, EVA, and MCLEA on the JA-EN dataset after removing attribute modality, relational modality, and visual modality, respectively. The experimental results show that removing attribute modality or relational modality has a relatively small impact on the overall performance. For example, the $H@1$ index of MEAformer decreases by only 0.3\% after removing the relational mode, while the $H@1$ index of EVA decreases by 0.7\% after removing the attribute mode. In contrast, when the visual modality is removed, the performance of the methods decreases significantly, e.g., MEAformer's $H@1$ decreases by 16\%, while MCLEA's decreases by 11.1\%.
This phenomenon suggests that although attribute and relational modalities provide useful complements in multimodal information fusion, visual modalities play a more critical role in enhancing the quality of entity representations. The reason for this discrepancy may lie in the fact that the number of categories of attribute and relational information is more limited, making it difficult to provide sufficient discriminative properties. For example, in JA, there are only about 1,299 relationship types compared to 15,000 entities, limiting the model’s ability to distinguish entities based on semantic information alone. In contrast, the 12,739 entity images provide rich, high-dimensional visual features that enhance the discriminative power of entity embeddings~\cite{wang2023probing}.

\subsection{Parameter Analysis}
\begin{tcolorbox}[
    colback=blue!5!white,
    boxsep=-0.1pt 
]
\textbf{RQ3:} How does the stability of PSQE perform under different parameter scenarios?
\end{tcolorbox}

% To address research question 3, we designed two experiments in unsupervised scenarios to analyze the effects of \textbf{different numbers of clusters}. 
% Our experimental results show that the number of clusters has a small impact on the final model effect. While the alignment effect improves as the number of initialization seeds increases, the overall impact is limited. 
To investigate the impact of the number of clusters in stage \uppercase\expandafter{\romannumeral 1}, we evaluated the performance of PSQE with the three baseline algorithms under different clustering settings on the FR-EN dataset, as shown in Fig.~\ref{CLUSter}. The results indicate that variations in the number of clusters have a slight impact on the precision of different models with minimal fluctuations. For example, when PSQE is applied to EVA, the $H@1$ values across different clustering settings range from 92.7\% to 93.2\% with a mere \textbf{0.5\%} difference and no significant fluctuations. This suggests that the performance of PSQE is not particularly sensitive to changes in the number of clusters. 

\begin{figure}[!t]
\centering
\includegraphics[width=\linewidth]{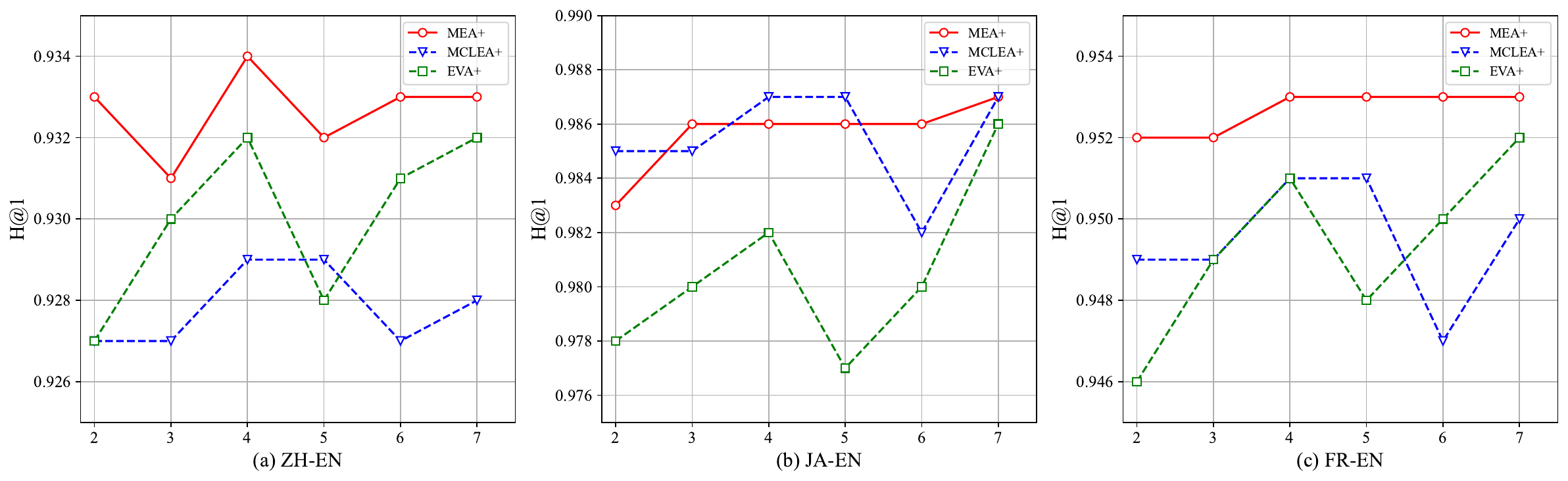}
\caption{The impact of different clustering numbers in Stage \uppercase\expandafter{\romannumeral 1} on the PSQE performance on the FR-EN dataset.}
\label{CLUSter}
\end{figure}

\section{Conclusion}
In this paper, we propose PSQE (Pseudo-Seed Quality Enhancement), a novel theoretical-practical framework for unsupervised MMEA that jointly optimizes seed precision and graph coverage balance. PSQE leverages multimodal information (text, images, and attributes) alongside a clustering-resampling strategy to generate high-quality pseudo seeds in three stages: (1) multimodal embedding and clustering to ensure balanced coverage, (2) feature resampling with error correction to enhance precision, and (3) neighborhood sampling to refine alignment. Theoretically, we analyze how pseudo seeds influence contrastive learning models, demonstrating that seed precision governs the attraction term (aligning correct pairs), while coverage balance regulates the repulsion term (separating negative samples). Imbalanced coverage skews optimization toward dense graph regions, degrading performance for sparse entities—a phenomenon empirically validated by our experiments. We validate the effectiveness of our proposed method by applying it to three representative unsupervised approaches across two large-scale public benchmarks. The experimental results on public datasets demonstrated the effectiveness in multimodal entity alignment tasks.%In this paper, we provided the first systematic analysis of the factors affecting the quality of pseudo-alignment seeds when using contrastive learning methods in unsupervised multimodal entity alignment. Our study shows that the graph coverage  balance and the precision of pseudo-alignment seeds have important impacts on model training and that low-quality pseudo seeds can lead to a degradation in the quality of the representation of entity embeddings. To address this, we proposed the PSQE strategy, which enhances seed precision with multimodal information and optimizes seed graph coverage  balance through a three-stage generation process. The experimental results on public datasets demonstrated the effectiveness of PSQE in multimodal entity alignment tasks.

\section{Acknowledgements}
This work was supported by the National Natural Science Foundation of China (Grant No. 62120106008), Anhui Provincial Science and Technology Fortification Plan (Grant No. 202423k09020015), the Hefei Key Generic Technology Research and Development Program (No. 2024SGJ010), the Youth Talent Support Program of the Anhui Association for Science and Technology (Grant No. RCTJ202420), New Chongging Youth Innovation Talent Project under Grant  CSTB2024NSCQ-QCXMX0035, and the Open Project of Key Laboratory of Knowledge Engineering with Big Data (the Ministry of Education of China), under grant number BigKEOpen2025-03. The computation was completed on the HPC Platform of Hefei University of Technology. Y. He was not supported by any of these funds.
\clearpage

\bibliographystyle{ACM-Reference-Format}
\bibliography{sample-base}

@inproceedings{bu2025query,
  title={Query-Driven Multimodal GraphRAG: Dynamic Local Knowledge Graph Construction for Online Reasoning},
  author={Bu, Chenyang and Chang, Guojie and Chen, Zihao and Dang, CunYuan and Wu, Zhize and He, Yi and Wu, Xindong},
  booktitle={Findings of the Association for Computational Linguistics: ACL 2025},
  pages={21360--21380},
  year={2025}
}

@inproceedings{huang2025mitigate,
  title={How to Mitigate Information Loss in Knowledge Graphs for GraphRAG: Leveraging Triple Context Restoration and Query-Driven Feedback},
  author={Huang, Manzong and Bu, Chenyang and He, Yi and Wu, Xindong},
 booktitle    = {Proceedings of the Thirty-Fourth International Joint Conference on Artificial Intelligence},
  pages        = {8104--8112},
publisher    = {ijcai.org},
  year         = {2025}
}

@InProceedings{MMKG,
author="Liu, Ye
and Li, Hui
and Garcia-Duran, Alberto
and Niepert, Mathias
and Onoro-Rubio, Daniel
and Rosenblum, David S.",
title="Mmkg: Multi-modal knowledge graphs",
booktitle="Proceedings of The Semantic Web",
year="2019",
publisher="Springer",
pages="459--474"
}

@article{liu2024cross,
  title={Cross-model cross-stream learning for self-supervised human action recognition},
  author={Liu, Mengyuan and Liu, Hong and Guo, Tianyu},
  journal={IEEE Transactions on Human-Machine Systems},
  year={2024},
  publisher={IEEE}
}

@INPROCEEDINGS{desalign,
  author={Wang, Yuanyi and Sun, Haifeng and Wang, Jiabo and Wang, Jingyu and Tang, Wei and Qi, Qi and Sun, Shaoling and Liao, Jianxin},
  booktitle={ Proceedings of International Conference on Data Engineering }, 
  title={Towards semantic consistency: Dirichlet energy driven robust multi-modal entity alignment}, 
  year={2024},
  volume={},
  number={},
  pages={3559-3572},
}

@article{hamerly2003learning,
  title={Learning the k in k-means},
  author={Hamerly, Greg and Elkan, Charles},
  journal={Advances in neural information processing systems},
  volume={16},
  year={2003}
}

@inproceedings{tanwar-etal-2023-multilingual,
    title = "Multilingual llms are better cross-lingual in-context learners with alignment",
    author = "Tanwar, Eshaan  and
      Dutta, Subhabrata  and
      Borthakur, Manish  and
      Chakraborty, Tanmoy",
    booktitle = "Proceedings of the Association for Computational Linguistics",
    year = "2023",
    publisher = "ACL",
    pages = "6292--6307",
}

@inproceedings{zhang2025robust,
  title={An Robust Entity Alignment Method based on Knowledge Distillation with Noisy Aligned Pairs},
  author={Zhang, Yuhong and Song, Hangchi and Zhu, Xiaolong and Bu, Chenyang and Yu, Kui},
  booktitle={Proceedings of ACM International Conference on Information and Knowledge Management},
  pages={5510--5514},
publisher = {ACM},
  year={2025}
}

@ARTICLE{10948482,
  author={Li, Jiatong and Liu, Wei and Ding, Zhihao and Fan, Wenqi and Li, Yuqiang and Li, Qing},
  journal={IEEE Transactions on Knowledge and Data Engineering (Early Access)}, 
  title={Large language models are in-context molecule learners}, 
  year={2025},
  volume={},
  number={},
  pages={1-13},
  doi={10.1109/TKDE.2025.3557697}
}

@article{khosla2020supervised,
  title={Supervised contrastive learning},
  author={Khosla, Prannay and Teterwak, Piotr and Wang, Chen and Sarna, Aaron and Tian, Yonglong and Isola, Phillip and Maschinot, Aaron and Liu, Ce and Krishnan, Dilip},
  journal={Advances in neural information processing systems},
  volume={33},
  pages={18661--18673},
  year={2020}
}

@inproceedings{zhu2017iterative,
  title={Iterative entity alignment via joint knowledge embeddings.},
  author={Zhu, Hao and Xie, Ruobing and Liu, Zhiyuan and Sun, Maosong},
 booktitle ={Proceedings of International Joint Conference on Artificial Intelligence},
  publisher={IJCAI Oragnization},
  volume={17},
  pages={4258--4264},
  year={2017}
}

@article{mak,
  title={Improving contrastive learning on imbalanced data via open-world sampling},
  author={Jiang, Ziyu and Chen, Tianlong and Chen, Ting and Wang, Zhangyang},
  journal={Advances in Neural Information Processing Systems},
  volume={34},
  pages={5997--6009},
  year={2021}
}

@inproceedings{ack-mmea,
author = {Li, Qian and Guo, Shu and Luo, Yangyifei and Ji, Cheng and Wang, Lihong and Sheng, Jiawei and Li, Jianxin},
title = {Attribute-consistent knowledge graph representation learning for multi-modal entity alignment},
year = {2023},
publisher = {ACM},
booktitle = {Proceedings of the ACM Web Conference},
pages = {2499–2508},
numpages = {10}
}

@inproceedings{liu2022selfkg,
  title={Selfkg: Self-supervised entity alignment in knowledge graphs},
  author={Liu, Xiao and Hong, Haoyun and Wang, Xinghao and Chen, Zeyi and Kharlamov, Evgeny and Dong, Yuxiao and Tang, Jie},
  booktitle={Proceedings of the ACM web conference},
  pages={860--870},
publisher = {ACM},
  year={2022}
}

@article{zhu2023mmiea,
  title={Mmiea: Multi-modal interaction entity alignment model for knowledge graphs},
  author={Zhu, Bin and Wu, Meng and Hong, Yunpeng and Chen, Yi and Xie, Bo and Liu, Fei and Bu, Chenyang and Ding, Weiping},
  journal={Information Fusion},
  volume={100},
  pages={101935},
  year={2023},
  publisher={Elsevier}
}

@article{wang2023probing,
  title={Probing the impacts of visual context in multimodal entity alignment},
  author={Wang, Meng and Shi, Yinghui and Yang, Han and Zhang, Ziheng and Lin, Zhenxi and Zheng, Yefeng},
  journal={Data Science and Engineering},
  volume={8},
  number={2},
  pages={124--134},
  year={2023},
  publisher={Springer}
}

@inproceedings{MSNEA,
  title={Multi-modal siamese network for entity alignment},
  author={Chen, Liyi and Li, Zhi and Xu, Tong and Wu, Han and Wang, Zhefeng and Yuan, Nicholas Jing and Chen, Enhong},
  booktitle={Proceedings of SIGKDD Conference on Knowledge Discovery and Data Mining},
  pages={118--126},
  year={2022},
  publisher = {ACM},
}

@inproceedings{chen2023meaformer,
  title={Meaformer: Multi-modal entity alignment transformer for meta modality hybrid},
  author={Chen, Zhuo and Chen, Jiaoyan and Zhang, Wen and Guo, Lingbing and Fang, Yin and Huang, Yufeng and Zhang, Yichi and Geng, Yuxia and Pan, Jeff Z. and Song, Wenting and Chen, Huajun},
  booktitle={Proceedings of International Conference on Multimedia},
  pages={3317--3327},
publisher = {ACM},
  year={2023}
}

@inproceedings{cai2022entity,
  title={Entity alignment with reliable path reasoning and relation-aware heterogeneous graph transformer},
  author={Cai, Weishan and Ma, Wenjun and Zhan, Jieyu and Jiang, Yuncheng},
  booktitle = {Proceedings of International Joint Conference on Artificial Intelligence},
  publisher = {IJCAI Organization},
  pages={1930–193},
  year={2022},

}

@inproceedings{sun2017cross,
  title={Cross-lingual entity alignment via joint attribute-preserving embedding},
  author={Sun, Zequn and Hu, Wei and Li, Chengkai},
  booktitle={Proceedings of The Semantic Web--ISWC},
  pages={628--644},
  year={2017},
  organization={Springer}
}

@inproceedings{liu2021visual,
  title={Visual pivoting for (unsupervised) entity alignment},
  author={Liu, Fangyu and Chen, Muhao and Roth, Dan and Collier, Nigel},
  booktitle={Proceedings of the AAAI Conference on Artificial Intelligence},
  volume={35},
  number={5},
  pages={4257--4266},
publisher = {AAAI Press},
  year={2021}
}

@inproceedings{ijcai2019p733,
  title     = {Relation-aware entity alignment for heterogeneous knowledge graphs},
  author    = {Wu, Yuting and Liu, Xiao and Feng, Yansong and Wang, Zheng and Yan, Rui and Zhao, Dongyan},
  booktitle = {Proceedings of International Joint Conference on Artificial Intelligence},
  publisher = {IJCAI Organization},
  pages     = {5278--5284},
  year      = {2019},
  month     = {7}

}

@inproceedings{zhu2022balanced,
  title={Balanced contrastive learning for long-tailed visual recognition},
  author={Zhu, Jianggang and Wang, Zheng and Chen, Jingjing and Chen, Yi-Ping Phoebe and Jiang, Yu-Gang},
  booktitle={Proceedings of the IEEE/CVF Conference on Computer Vision and Pattern Recognition},
publisher= {IEEE},
  pages={6908--6917},
  year={2022}
}

@inproceedings{bu2024automatic,
  title={Automatic fusion for multimodal entity alignment: A new perspective from automatic architecture search},
  author={Bu, Chenyang and Hong, Yunpeng and Zang, Shiji and Chang, Guojie and Wu, Xindong},
  booktitle={Proceedings of IEEE International Conference on Multimedia and Expo},
  pages={1--6},
  year={2024},
  organization={IEEE}
}

@inproceedings{feng-etal-2022-language,
    title = "Language-agnostic bert sentence embedding",
    author = "Feng, Fangxiaoyu  and
      Yang, Yinfei  and
      Cer, Daniel  and
      Arivazhagan, Naveen  and
      Wang, Wei",
    booktitle = "Proceedings of the Association for Computational Linguistics",
    year = "2022",
    publisher = "ACL",
    pages = "878--891",
}

@inproceedings{devlin-etal-2019-bert,
    title = "{B}ert: Pre-training of deep bidirectional transformers for language understanding",
    author = "Devlin, Jacob  and
      Chang, Ming-Wei  and
      Lee, Kenton  and
      Toutanova, Kristina",
    booktitle = "Proceedings of Conference of the North {A}merican Chapter of the Association for Computational Linguistics: Human Language Technologies",
    year = "2019",
    publisher = "ACL",
    pages = "4171--4186",
}

@article{kasneci2023chatgpt,
  title={ChatGPT for good? On opportunities and challenges of large language models for education},
  author={Kasneci, Enkelejda and Se{\ss}ler, Kathrin and K{\"u}chemann, Stefan and Bannert, Maria and Dementieva, Daryna and Fischer, Frank and Gasser, Urs and Groh, Georg and G{\"u}nnemann, Stephan and H{\"u}llermeier, Eyke and others},
  journal={Learning and individual differences},
  volume={103},
  pages={102274},
  year={2023},
  publisher={Elsevier}
}

@inproceedings{wang2024pseudo,
  title={Pseudo-label calibration semi-supervised multi-modal entity alignment},
  author={Wang, Luyao and Qi, Pengnian and Bao, Xigang and Zhou, Chunlai and Qin, Biao},
  booktitle={Proceedings of the AAAI conference on artificial intelligence},
  volume={38},
  number={8},
publisher={AAAI},
  pages={9116--9124},
  year={2024}
}

@article{edge2024local,
  title={From local to global: A graph rag approach to query-focused summarization},
  author={Edge, Darren and Trinh, Ha and Cheng, Newman and Bradley, Joshua and Chao, Alex and Mody, Apurva and Truitt, Steven and Larson, Jonathan},
  journal={arXiv preprint arXiv:2404.16130},
  year={2024},
url = {https://arxiv.org/pdf/2404.16130}
}

@article{wu2021knowledge,
  title={Knowledge Graph for China's Genealogy},
  author={Wu, Xindong and Jiang, Tingting and Zhu, Yi and Bu, Chenyang},
  journal={IEEE Transactions on Knowledge and Data Engineering},
  volume={35},
  number={1},
  pages={634--646},
  year={2021},
  publisher={IEEE}
}

@inproceedings{DBLP:conf/ijcai/ZhuoPWW0LW025,
  author       = {Xingrui Zhuo and
                  Shirui Pan and
                  Jiapu Wang and
                  Gongqing Wu and
                  Zan Zhang and
                  Rui Li and
                  Zizhong Wei and
                  Xindong Wu},
  title        = {Progressive Prefix-Memory Tuning for Complex Logical Query Answering
                  on Knowledge Graphs},
  booktitle    = {Proceedings of the Thirty-Fourth International Joint Conference on
                  Artificial Intelligence, {IJCAI} 2025, Montreal, Canada, August 16-22,
                  2025},
  pages        = {3716--3724},
  publisher    = {ijcai.org},
  year         = {2025}
}

@inproceedings{zeng2023imgcl,
  title={Imgcl: Revisiting graph contrastive learning on imbalanced node classification},
  author={Zeng, Liang and Li, Lanqing and Gao, Ziqi and Zhao, Peilin and Li, Jian},
  booktitle={Proceedings of the AAAI Conference on Artificial Intelligence},
  volume={37},
  number={9},
publisher = {AAAI Press},
  pages={11138--11146},
  year={2023}
}

@inproceedings{cui2021parametric,
  title={Parametric contrastive learning},
  author={Cui, Jiequan and Zhong, Zhisheng and Liu, Shu and Yu, Bei and Jia, Jiaya},
  booktitle={Proceedings of the IEEE/CVF International Conference on Computer Vision},
  pages={715--724},
publisher = "IEEE",
  year={2021}
}

@inproceedings{kang2020exploring,
  title={Exploring balanced feature spaces for representation learning},
  author={Kang, Bingyi and Li, Yu and Xie, Sa and Yuan, Zehuan and Feng, Jiashi},
  booktitle={Proceedings of International conference on Learning Representations},
  publisher    = {OpenReview.net},
  year         = {2021},
  url          = {https://openreview.net/forum?id=OqtLIabPTit},
}

@article{chen2024entity,
  title={Entity alignment with noisy annotations from large language models},
  author={Chen, Shengyuan and Zhang, Qinggang and Dong, Junnan and Hua, Wen and Li, Qing and Huang, Xiao},
  journal={arXiv preprint arXiv:2405.16806},
  year={2024},
  URL = {https://arxiv.org/pdf/2405.16806}
}

@article{jiang2023integrating,
  title={Integrating symbol similarities with knowledge graph embedding for entity alignment: An unsupervised framework},
  author={Jiang, Tingting and Bu, Chenyang and Zhu, Yi and Wu, Xindong},
  journal={Intelligent Computing},
  volume={2},
  pages={0021},
  year={2023},
  publisher={AAAS}
}

@article{tang2023fused,
  title={A fused gromov-wasserstein framework for unsupervised knowledge graph entity alignment},
  author={Tang, Jianheng and Zhao, Kangfei and Li, Jia},
  journal={arXiv preprint arXiv:2305.06574},
  year={2023},
url ={https://arxiv.org/pdf/2305.06574}
}

@inproceedings{zeng2021towards,
  title={Towards entity alignment in the open world: An unsupervised approach},
  author={Zeng, Weixin and Zhao, Xiang and Tang, Jiuyang and Li, Xinyi and Luo, Minnan and Zheng, Qinghua},
  booktitle={Proceedings of Database Systems for Advanced Applications},
  pages={272--289},
  year={2021},
  organization={Springer}
}

@inproceedings{he2019unsupervised,
  title={Unsupervised entity alignment using attribute triples and relation triples},
  author={He, Fuzhen and Li, Zhixu and Qiang, Yang and Liu, An and Liu, Guanfeng and Zhao, Pengpeng and Zhao, Lei and Zhang, Min and Chen, Zhigang},
  booktitle={Proceedings of Database Systems for Advanced Applications},
  pages={367--382},
  year={2019},
  organization={Springer}
}

@inproceedings{guo2019learning,
  title={Learning to exploit long-term relational dependencies in knowledge graphs},
  author={Guo, Lingbing and Sun, Zequn and Hu, Wei},
  booktitle={Proceedings of International Conference on Machine Learning},
  pages={2505--2514},
  year={2019},
  organization={PMLR}
}

@inproceedings{yang2020cotsae,
  title={Cotsae: Co-training of structure and attribute embeddings for entity alignment},
  author={Yang, Kai and Liu, Shaoqin and Zhao, Junfeng and Wang, Yasha and Xie, Bing},
  booktitle={Proceedings of the AAAI Conference on Artificial Intelligence},
  volume={34},
  number={03},
publisher = {IJCAI Organization},
  pages={3025--3032},
  year={2020}
}

@inproceedings{mclea,
   author = {Zhenxi Lin and Ziheng Zhang and Meng Wang and Yinghui Shi and Xian Wu and Yefeng Zheng},
   booktitle = {Proceedings of International Conference on Computational Linguistics},
   pages = {2572-2584},
   title = {Multi-modal contrastive representation learning for entity alignment},
   year = {2022},
    publisher={ICCL}
}

@inproceedings{ni2023psnea,
  title={Psnea: Pseudo-siamese network for entity alignment between multi-modal knowledge graphs},
  author={Ni, Wenxin and Xu, Qianqian and Jiang, Yangbangyan and Cao, Zongsheng and Cao, Xiaochun and Huang, Qingming},
  booktitle={Proceedings of International Conference on Multimedia},
  pages={3489--3497},
publisher = {ACM},
  year={2023}
}

@article{meng2025se,
  title={SE-GNN: Seed Expanded-Aware Graph Neural Network with Iterative Optimization for Semi-supervised Entity Alignment},
  author={Meng, Tao and Shan, Shuo and Shao, Hongen and Shou, Yuntao and Ai, Wei and Li, Keqin},
  journal={IEEE Transactions on Knowledge and Data Engineering},
  year={2025},
  publisher={IEEE}
}

@inproceedings{iclea,
author = {Zeng, Kaisheng and Dong, Zhenhao and Hou, Lei and Cao, Yixin and Hu, Minghao and Yu, Jifan and Lv, Xin and Cao, Lei and Wang, Xin and Liu, Haozhuang and Huang, Yi and Feng, Junlan and Wan, Jing and Li, Juanzi and Feng, Ling},
title = {Interactive Contrastive Learning for Self-Supervised Entity Alignment},
year = {2022},
publisher = {ACM},
booktitle = {Proceedings of ACM International Conference on Information \& Knowledge Management},
pages = {2465–2475},
numpages = {11},

}

@inproceedings{zhuo2025effective,
  title={Effective Instruction Parsing Plugin for Complex Logical Query Answering on Knowledge Graphs},
  author={Zhuo, Xingrui and Wang, Jiapu and Wu, Gongqing and Pan, Shirui and Wu, Xindong},
  booktitle={Proceedings of the ACM on Web Conference},
  pages={4780--4792},
PUBLISHER={ACM},
  year={2025}
}

@article{yu2020cross,
  title={Cross-modal knowledge reasoning for knowledge-based visual question answering},
  author={Yu, Jing and Zhu, Zihao and Wang, Yujing and Zhang, Weifeng and Hu, Yue and Tan, Jianlong},
  journal={Pattern Recognition},
  volume={108},
  pages={107563},
  year={2020},
  publisher={Elsevier}
}

@inproceedings{retrieval,
  title={Deep supervised cross-modal retrieval},
  author={Zhen, Liangli and Hu, Peng and Wang, Xu and Peng, Dezhong},
  booktitle={Proceedings of the IEEE/CVF Conference on Computer Vision and Pattern Recognition},
PUBLISHER= {IEEE},
  pages={10394--10403},
  year={2019}
}

@inproceedings{gat,
  title={Graph attention networks},
  author={Veli{\v{c}}kovi{\'c}, Petar and Cucurull, Guillem and Casanova, Arantxa and Romero, Adriana and Lio, Pietro and Bengio, Yoshua},
  booktitle={Proceedings of International Conference on Learning Representations},
publisher={Openreview.net},
  year={2018}
}

@article{ektefaie2023multimodal,
  title={Multimodal learning with graphs},
  author={Ektefaie, Yasha and Dasoulas, George and Noori, Ayush and Farhat, Maha and Zitnik, Marinka},
  journal={Nature Machine Intelligence},
  volume={5},
  number={4},
  pages={340--350},
  year={2023},
  publisher={Nature Publishing Group UK London}
}

@ArtifactSoftware{R,
    title = {R: A Language and Environment for Statistical Computing},
    author = {{R Core Team}},
    organization = {R Foundation for Statistical Computing},
    address = {Vienna, Austria},
    year = {2019},
    url = {https://www.R-project.org/},
}

% \clearpage
%%
%% If your work has an appendix, this is the place to put it.
\appendix

% Our appendix consists of three sections. The first section provides additional insights into the theoretical analysis, the second section offers supplementary algorithms related to the methodology, and the third section includes further details about the experimental section.
% In the supplementary experiments, we incorporated a comparison with the latest algorithm PCMEA, presented the results of PSQE on two benchmark datasets, ICEWS\_WIKI and ICEWS\_YAGO, and provided further analysis of several additional ablation studies.

\section{Complementary Theoretical Analysis}
% The theoretical analysis is complemented by three parts, the first of which is an introduction to existing pseudo seed generation methods, the second is a derivation of Theorem 1, and the third is an introduction to the gradient propagation process.

\subsection{Pseudo-Alignment Seed Generation}

Existing pseudo-alignment seed generation methods ~\cite{mclea,chen2023meaformer} are mainly based on the Unsupervised Visual Pivoting (UVP) proposed in EVA, which computes visual similarity to infer the correspondence between knowledge graphs. Specifically, UVP first computes the visual cosine similarity \( \text{sim}(e_1^{i}, e_2^{j}) \) between all cross graph entities $( e_1^{i}, e_2^{j})$ in the dataset, i.e:
\begin{equation}
\footnotesize
    \text{sim}(e_1^{i}, e_2^{j}) = \frac{h^i_1 \cdot h^j_2}{\|h^i_1\| \|h^j_2\|}
\end{equation}
where \( v(e_1^{i}) \) and \( v(e_2^{j}) \) denote the visual representations of the entities \(e_1^{i} \) and \(e_2^{j} \), respectively, and \(\|v(e)\|\) is its paradigm. UVP then constructs a cosine similarity matrix $ \text{SimMatrix}$  to capture the similarity between all pairs of entities \( (e_1^{i}, e_2^{j}) \):
\begin{equation}
\footnotesize
    \text{SimMatrix} = \left[ \text{sim}(e_1^{i}, e_2^{j}) \right]_{i,j}
\end{equation}

Next, the entity pairs in the matrix are sorted by their similarity from highest to lowest to obtain a sorted list of entity pairs \( \text{SortedPairs} \). Based on this, we sequentially select the most similar entity pairs from the sorted entity pairs as visual pivots until we collect \( k \) pairs of the most similar entity pairs, denoted as \( \{(e_1^{i_1}, e_2^{j_1}), (e_1^{i_2}, e_2^{j_2}), \dots, (e_1^{i_k}, e_2^{j_k} )\} \). In this process, whenever a pair of entities \( (e_1^{i_m}, e_2^{j_m})\) is selected, we exclude all other links related to these two entities.

The result is  the pseudo-alignment
seed $\mathcal{S}$ containing $k$ pairs of visually most similar pairs of entities, where both entities in each pair are unduplicated, in the following form:
\begin{equation}
\footnotesize
    \mathcal{S} = \left\{ (e_1^{i_1}, e_2^{j_1}), (e_1^{i_2}, e_2^{j_2}), \dots, (e_1^{i_k}, e_2^{j_k}) \right\}, \quad e_1^{i} \neq e_2^{j} \quad \forall i, j
\end{equation}

\subsection{Proof of Theorem 1}
\label{proof1}
Inspired by the formulation of contrastive learning analysis in~\cite{zhu2022balanced,khosla2020supervised}, we first simplify Eq.~\ref{icl} for clarity by omitting $\tau$ and making an approximate calculation. Substituting Eq.~\ref{pm} into Eq.~\ref{icl}, we obtain:
\begin{equation}
\footnotesize
\vspace{-1em}
\begin{aligned}
L^{ICL}\approx & -\sum_{i\in B}\log(\frac{1}{2}(\frac{\exp{(h_{1}^{i}}^T\cdot h_{2}^{i})}{\exp(h_{1}^{i}\cdot h_{2}^{i})+\sum\limits_{j\in N_{i}^{ig}}\exp({h_{1}^{i}}^T\cdot h^{j})} \\
&+\frac{\exp({h_{2}^{i}}^T\cdot h_{1}^{i})}{\exp({h_{2}^{i}}^T\cdot h_{1}^{i})+\sum\limits_{j\in N_{i}^{ig}}\exp({h_{2}^{i}}^T\cdot h^{j})})) 
\end{aligned}
\label{eq12}
\end{equation}
where $B$ is the pseudo seed data representing a batch.

When we consider only one-way alignment, Eq.~\ref{eq12} can be simplified as:
\begin{equation}
\footnotesize
    L^{ICL} \approx  - \sum\limits_{i \in B} {\log (\frac{{\exp ({h_1^i}^T \cdot h_2^i)}}{{\exp ({h_1^i}^T \cdot h_2^i) + \sum\limits_{j \in {\mathcal N}_i^{ng}} {\exp ({h_1^i}^T \cdot {h^j})} }}} )
\end{equation}
Since the exponential function is convex, by applying Jensen's inequality:
\begin{equation}
\footnotesize
\begin{aligned}
    L^{ICL} &\ge \sum\limits_{i \in B} {\log (1 + \frac{{(|B| - 1)\exp (\frac{1}{{|B| - 1}}\sum\limits_{j \in {\mathcal N}_i^{ng}} {{h_1^i}^T \cdot  {h^j}}) }}{{\exp (h_1^i \cdot h_2^i)}}} ) \\
    &  = \sum\limits_{i \in B} {\log (1 + } (|B| - 1)\exp ({\frac{1}{{(|B| - 1)}}\sum\limits_{j \in {\mathcal N}_i^{ng}} {{h_1^i}^T\cdot {h^j}}}  -  {{h_1^i}^T \cdot h_2^i}))
\end{aligned}
\label{equ14}
\end{equation}
All inner products ${h_1^i}^T  \cdot h^j = \sigma$ hold if and only if there exist values $\sigma$ such that $\forall {j \in {\mathcal N}_i^{ng}}$. 

Since the embedding vectors satisfy $h=1$, then $||h^i-h^j||=1-2h^{iT}\cdot h^j$, thus Eq.~\ref{equ14} can be expressed as:
 \begin{equation}
 \footnotesize
{L^{ICL}} \ge \sum\limits_{i \in B} {\log (1 + } D\exp (\underbrace {\frac{1}{2}||h_1^i - h_2^i||}_{attraction} - \underbrace {\frac{1}{{2D}}\sum\limits_{j \in {\mathcal N}_i^{ng}} {||h_1^i - {h^j}||} }_{repulsion} - \frac{1}{D}))
 \end{equation}
 where ${\mathcal N}_i^{ng}$ denotes the set of negative samples for entity i, and $D=|B|-1$ corresponds to the number of negative instances in the mini-batch, excluding the anchor entity itself.

\subsection{Gradient Analysis}
\label{proof2}
Building upon the formulation of contrastive learning analysis in~\cite{zhu2022balanced,khosla2020supervised}, the ICL loss in Eq.~\ref{icl} is rewritten by neglecting the temperature coefficient $\tau$. Its gradient is given as follows:
\begin{equation}
\footnotesize
\begin{array}{l}
\frac{{\partial {L^{ICL}}}}{{\partial h_1^i}} =  - \frac{\partial }{{\partial h_1^i}}(\sum\limits_{i \in B} {\log (\frac{{\exp ({h_1^i}^T  \cdot h_2^i)}}{{\exp ({h_1^i}^T  \cdot h_2^i) + \sum\limits_{j \in {\mathcal N}_i^{ng}} {\exp ({h_1^i}^T  \cdot {h^j})} }}} ))\\
 =  - \sum\limits_{i \in B} {(\frac{{\partial {h_1^i}^T  \cdot h_2^i}}{{\partial h_1^i}}}  - \frac{\partial }{{\partial h_1^i}}(\log (\exp ({h_1^i}^T  \cdot h_2^i) + \sum\limits_{j \in {\mathcal N}_i^{ng}} {\exp ({h_1^i}^T  \cdot {h^j})} ))\\
 =  - \sum\limits_{i \in B} {(h_2^i}  - \frac{{h_2^i \cdot \exp ({h_1^i}^T  \cdot h_2^i) + \sum\limits_{j \in {\mathcal N}_i^{ng}} {{h^j} \cdot \exp ({h_1^i}^T  \cdot {h^j})} }}{{\exp ({h_1^i}^T  \cdot h_2^i) + \sum\limits_{j \in {\mathcal N}_i^{ng}} {\exp ({h_1^i}^T  \cdot {h^j})} }})\\
 = \sum\limits_{i \in B} {h_2^i({P_{ii}} - 1) + } \sum\limits_{j \in {\mathcal N}_i^{ng}} {\sum\limits_{i \in B} {{h^j}{P_{ij}}} } \\
 = \sum\limits_{i \in B} {h_2^i({P_{ii}} - 1) + } B\sum\limits_{j \in {\mathcal N}_i^{ng}} {{h^j}{P_{ij}}} 
\end{array}
\end{equation}
Where $B$ is the pseudo seed data representing a batch, and we have defined:
\begin{equation}
    \begin{array}{l}
{P_{ii}} = \frac{{\exp ({h_1^i}^T  \cdot h_2^i)}}{{\exp ({h_1^i}^T  \cdot h_2^i) + \sum\limits_{j \in {\mathcal N}_i^{ng}} {\exp ({h_1^i}^T  \cdot {h^j})} }}\\
{P_{ij}} = \frac{{\exp ({h_1^i}^T  \cdot {h^j})}}{{\exp ({h_1^i}^T  \cdot h_2^i) + \sum\limits_{j \in {\mathcal N}_i^{ng}} {\exp ({h_1^i}^T  \cdot {h^j})} }}
\end{array}
\end{equation}
Since a normalization function is applied before computing the ICL loss, we denote the output before normalization as $w_i$, with a slight abuse of notation, i.e., $h_1^i = w_i/||w_i||$. The gradient concerning $w_i$ is then given by:
\begin{equation}
\footnotesize
\begin{array}{l}
\frac{{\partial {L^{ICL}}}}{{\partial {w_i}}} = \frac{1}{{||{w_i}||}}(I - {h_1^i}^T  \cdot h{_1^i}^T)(\sum\limits_{i \in B} {h_2^i({P_{ii}} - 1) + B} \sum\limits_{j \in {\mathcal N}_i^{ng}} {{h^j}{P_{ij}}} )\\
 = \frac{1}{{||{w_i}||}}(\underbrace {\sum\limits_{i \in B} {(h_2^i - (} {h_1^i}^T  \cdot h_2^i) \cdot h_1^i)({P_{ii}} - 1)}_{attraction} \\
 + \underbrace {B\sum\limits_{j \in {\mathcal N}_i^{ng}} {({h^j} - } ({h_1^i}^T  \cdot {h^j}) \cdot h_1^i){P_{ij}}}_{repulsion})
\end{array}
\end{equation}

For the repulsion term, if the entity is very dissimilar to the negative sample, then ${h_1^i}^T  \cdot {h^j} = 0$. So the gradient of ICL is for the repulsion term:
\begin{equation}
\footnotesize
\begin{array}{l}
\sum\limits_{j \in {\mathcal N}_i^{ng}} {||{h^j} - } ({h_1^i}^T  \cdot {h^j}) \cdot h_1^i|| \cdot |{P_{ij}}| = \sum\limits_{j \in {\mathcal N}_i^{ng}} {|{P_{ij}}|} \\
 = \frac{{|{\mathcal N}_i^{ng}|}}{{\exp ({h_1^i}^T  \cdot h_2^i) + \sum\limits_{j \in {\mathcal N}_i^{ng}} {\exp ({h_1^i}^T  \cdot {h^j})} }}
\end{array}
\end{equation}
Therefore, it can be seen that the terms in the denominator are consistent for all negative class samples, so the negative class gradient is proportional to the number of samples. However, when the graph coverage balance of negative samples is poor, the more aggregated entities in the pseudo-aligned seed will be more than the scattered entities, which leads to the gradient of the more aggregated entities being larger than the scattered entities, and the optimization pays excessive attention to the more aggregated entities, which leads to the imbalance of the feature space.

\section{Supplementary Notes on Experiments}

\subsection{Impact of the Multimodal Error Correction  Mechanism}
 To isolate the standalone impact of MIC, we ablate it across three datasets by removing the MIC module from PSQE. The results are shown in Table~\ref{tab:mic-ablation}.
Removing the MIC module leads to a significant drop in pseudo seed precision—16.3\%, 18.5\%, and 15.9\% on ZH\_EN, JA\_EN, and FR\_EN, respectively. This demonstrates the indispensable role of MIC in improving seed quality. For example, on the FR\_EN dataset, MIC filters out noisy pairs generated during neighborhood expansion, such as the incorrect alignment (1841, 35198) from the candidate pool of (25047, 36136).
Moreover, we further assessed the impact of MIC on downstream MMEA performance. As shown in Fig.~\ref{RQ22} of the main paper, using PSQE-generated seeds without MIC in the EVA model leads to a noticeable performance drop, i.e., H\@1 on ZH\_EN decreases from 93.3\% to 92.4\%. These results attest that MIC enhances the quality of pseudo seeds and can benefit various MMEA models in general.

\begin{table}[htbp]
\centering
\caption{Effect of the MIC module on pseudo seed precision.}
\begin{tabular}{lccc}
\toprule
\textbf{Model} & \textbf{ZH\_EN} & \textbf{JA\_EN} & \textbf{FR\_EN} \\
\midrule
PSQE         & 82.5 & 81.0 & 81.0 \\
PSQE - MIC   & 66.2 & 62.5 & 65.1 \\
\bottomrule
\end{tabular}
\label{tab:mic-ablation}
\vspace{-0.5em}
\end{table}

\subsection{More Discussion on Time Consumption}
Table~\ref{timecost} demonstrates the time consumption of each stage of PSQE on the three datasets of DBP15K. Results show that while Stage II involves additional computation due to all-pair similarity calculations, the overall cost remains acceptable. For instance, using MEAformer as the alignment model on an A100 GPU, the full training process takes approximately 51.8 minutes, while our complete pseudo seed generation process across all three stages takes only about 10 minutes, accounting for roughly 20\% of the total runtime. Results show that although Stage II incurs extra cost, it remains practically efficient relative to the overall training time. 

\begin{table}[h]
\caption{Efficiency analysis of PSQE. PSQE runtime (seconds) statistics for generating pseudo-aligned seeds on the three DBP15K datasets.}
\begin{tabular}{ccllcllcll}
\hline
 & \multicolumn{3}{c}{ZH-EN} & \multicolumn{3}{c}{JA-EN} & \multicolumn{3}{c}{FR-EN} \\ \hline
stage \uppercase\expandafter{\romannumeral 1} & \multicolumn{3}{c}{75.3} & \multicolumn{3}{c}{64.3} & \multicolumn{3}{c}{66.2} \\
stage \uppercase\expandafter{\romannumeral 2} & \multicolumn{3}{c}{502.4} & \multicolumn{3}{c}{457.5} & \multicolumn{3}{c}{458.6} \\
stage \uppercase\expandafter{\romannumeral 3} & \multicolumn{3}{c}{5.2} & \multicolumn{3}{c}{4.7} & \multicolumn{3}{c}{5.4} \\ \hline
\end{tabular}
\label{timecost}
\vspace{-0.5em}
\end{table}

% From a theoretical perspective, (1) Stage I has a complexity of $O(N^{2}·d)$, primarily due to multi-modal similarity computation and clustering, where $N$ is the number of entities and $d$ is the embedding dimension; (2) Stage II also incurs $O(N^{2}·d)$ complexity, driven by the similarity matrix construction and global correction mechanism. Although it is the most compute-intensive step, it is performed only once before alignment training, and thus remains scalable; (3) Stage III operates at $O(S^{2}·d)$ complexity, with $S$ being the number of pseudo seeds. It includes neighborhood expansion ($\approx O(S \cdot K \cdot d) $, where $K$ is the average neighbor size) and a secondary correction step.

\begin{table}[htbp]
\centering
\caption{Comparison of H@1 accuracy with and without PSQE under fixed pseudo seed sizes.}
\begin{tabular}{lccc}
\toprule
\textbf{Method} & \textbf{ZH\_EN} & \textbf{JA\_EN} & \textbf{FR\_EN} \\
\midrule
MEAformer        & 0.838 & 0.887 & 0.929 \\
MEAformer+PSQE   & 0.842 & 0.892 & 0.932 \\
\bottomrule
\end{tabular}
\label{tab:psqe-fixed}
\vspace{-1em}
\end{table}

\vspace{-0.5em}
\subsection{More Discussion on Pseudo-seed}
To evaluate the adaptability and robustness of PSQE, we conduct two groups of experiments. 
Unlike methods that require manual specification of the number of pseudo seeds, PSQE automatically adjusts the size based on data distribution and the quality of initial seeds. For example, with an initial 1000 pseudo seeds on the ZH\_EN, JA\_EN, and FR\_EN datasets, PSQE expands the seed set to 2024, 2171, and 2007, respectively. Conversely, on the DWY15K-V1 and DWY15K-V2 datasets, it prunes the initial 5000 seeds down to 4325 and 4134. This demonstrates PSQE's flexibility in either expanding or reducing the pseudo seed set as needed.
We further evaluate PSQE under fixed pseudo seed settings to validate its robustness. We compare the baseline MEAformer with MEAformer+PSQE on ZH\_EN, JA\_EN, and FR\_EN datasets. As shown in Tab.~\ref{tab:psqe-fixed}, PSQE consistently improves alignment performance even without changing the number of pseudo seeds, owing to joint optimization of seed precision and distributional coverage.

% \subsection{More Discussion on Missing Modalities}
% \begin{table}[h]
% \centering
% \caption{Performance comparison of imputation strategies on the FR\_EN dataset.}
% \begin{tabular}{lccc}
% \toprule
% \textbf{Method} & \textbf{H@1} & \textbf{H@10} & \textbf{MRR} \\
% \midrule
% PSQE+Gauss & 0.932 & 0.986 & 0.953 \\
% PSQE+Zero  & 0.928 & 0.983 & 0.948 \\
% \bottomrule
% \end{tabular}
% \label{tab:imputation}
% \end{table}
% To investigate the impact of modality imbalance in MMEA, we conduct a controlled experiment comparing two commonly used imputation strategies for handling missing modality information in the visual space: (1) Zero-vector imputation, where the missing features are filled with zero vectors; and (2) Gaussian-based imputation, where the missing embeddings are sampled from a Gaussian distribution estimated using the empirical mean and standard deviation of the observed visual embeddings.

% In Table~\ref{tab:imputation}, we implement both strategies on the FR\_EN dataset using the MEAformer framework, and evaluate their performance using standard metrics including $H@1$, $H@10$, and $MRR$. 
% The Gaussian-based imputation method achieves higher performance across all metrics, with an improvement of +0.4\% in H@1 compared to zero imputation. This demonstrates its advantage in generating more realistic and distributed embeddings, thereby reducing the concentration of missing entities in the shared space and improving alignment quality.

\end{sloppypar} 

\end{document}